\DeclareMathOperator\artanh{artanh}
\newtheorem{proposition}{Proposition}
\newcounter{protocol}
\newcommand{\id}{\mathds{1}}
\begin{document}

\title{Towards Device-Independent Quantum Key Distribution with Photonic Devices}

\author{Corentin Lanore}
\affiliation{Universit\'e Paris-Saclay, CEA, CNRS, Institut de physique th\'eorique, 91191, Gif-sur-Yvette, France}

\author{Xavier Valcarce}
\email{xavier.valcarce@ipht.fr}
\affiliation{Universit\'e Paris-Saclay, CEA, CNRS, Institut de physique th\'eorique, 91191, Gif-sur-Yvette, France}

\author{Jean Etesse}
\affiliation{Université Côte d’Azur, CNRS, Institut de Physique de Nice (INPHYNI), Nice, France}

\author{Anthony Martin}
\affiliation{Université Côte d’Azur, CNRS, Institut de Physique de Nice (INPHYNI), Nice, France}

\author{Jean-Daniel Bancal}
\affiliation{Universit\'e Paris-Saclay, CEA, CNRS, Institut de physique th\'eorique, 91191, Gif-sur-Yvette, France}

\begin{abstract}
Quantum Key Distribution (QKD) protocols enable two distant parties to communicate with information-theoretically proven secrecy. However, these protocols are generally vulnerable to potential mismatches between the physical modeling and the implementation of their quantum operations, thereby opening opportunities for side channel attacks.
Device-Independent (DI) QKD addresses this problem by reducing the degree of device modeling to a black-box setting.
The stronger security obtained in this way comes at the cost of a reduced noise tolerance, rendering experimental demonstrations more challenging: so far, only one experiment based on trapped ions was able to successfully generate a secret key.
Photonic platforms have however long been preferred for QKD thanks to their suitability to optical fiber transmission, high repetition rates, readily available hardware, and potential for circuit integration.
In this work, we assess the feasibility of DIQKD on a photonic circuit recently identified by machine learning techniques.
For this, we introduce an efficient converging hierarchy of semi-definite programs (SDP) to bound the conditional von Neumann entropy and develop a finite-statistics analysis that takes into account full outcome statistics. Our analysis shows that the proposed optical circuit is sufficiently resistant to noise to make an experimental realization realistic.
\end{abstract}

% TODO:
%  - Replace Fig 12a) with the key-rate rather than the
%  - Detail the dependences and prepare a notebook reproducing the plots in the code repo

\maketitle

\section{Introduction} \label{sec:intro}
Quantum theory predicts that measurements on two quantum systems prepared in a maximally entangled state can yield perfectly correlated outcomes that are fundamentally unpredictable to any third party. This is one of the founding ideas of Quantum Key Distribution (QKD)~\cite{Ekert91}, which promises secure communication even in the presence of an eavesdropper with unlimited computational capability.

However, the security of QKD schemes typically relies on some level of characterization of the quantum hardware, whether for its quantum sources or measurement devices~\cite{Scarani2009,renner_quantum_2023}. Deviation of the quantum hardware from its expected behavior can then be exploited by an adversary to attack the protocol, thereby compromising its security~\cite{makarov-nature,makarov-nature2,review-pan}. Device-Independent Quantum Key Distribution (DIQKD) protocols address this issue by relaxing the requirements on quantum devices and treating them simply as black boxes~\cite{Acn2007,Vazirani2014,ArnonFriedman2018}. By removing the need to trust the internal working of the quantum devices employed, these protocols reduce underlying assumptions and enhance security. The proper functioning of the quantum devices is then inferred during the execution of the protocol by monitoring a Bell score, rather than being trusted a priori~\cite{reviewBell}. This ensures that the parties' measurement outcomes remain secret regardless of the device's internal operations~\cite{Pironio2010,DIQKDexp2}.

The stronger security guarantees of DIQKD protocols come at the cost of a challenging experimental implementation. Achieving Bell inequality violations high enough to extract a secret key while simultaneously closing the detection loophole \cite{reviewBell} is a tall order. It is therefore not surprising that the first DIQKD experiments have only appeared recently~\cite{DIQKDexp2,DIQKDexp1,DIQKDexp3}. One such experiment employed a complex setup combining trapped ions and entanglement swapping~\cite{DIQKDexp2}. This system successfully realized a DIQKD protocol, generating a secure key between two parties distant by $2$ meters. The neutral atom platform used in~\cite{DIQKDexp1} exhibited comparable complexity; however, in this case, the larger distance prevented extracting a key, once again demonstrating the difficulty of implementing a DIQKD protocol. A third experiment based on a photonic platform~\cite{DIQKDexp3}, further highlighted these challenges with no secure key extracted even under a restricted class of attacks.

The fact that optical systems have not yet succeeded in realizing a DIQKD protocol yet may seem surprising since they are generally the preferred experimental platform for QKD. Indeed, photonic circuits are an appealing choice for DIQKD for several reasons, including their ability to propagate information over long distance using telecom wavelengths, the availability of efficient commercial hardware, high repetition rate promising a potentially large key rate, and potential for circuit integration.

The exploration of photonic implementations of DIQKD is, in fact, a timely and active area of research. Several recent works have studied photonic setups for DIQKD, particularly in presence of a central station performing a Bell state measurement between the communicating parties. By heralding entanglement despite transmission losses, this extra station allows extending the achievable distance. Both the case of single photon sources~\cite{Gonzalez-Ruiz24,Twin-field_DIQKD} and SPDC sources~\cite{moradi25,alwehaibi25,ishihara25} have been investigated in this context. While promising for large distances, these schemes inherently come with an added complexity due to the need for a middle station and multiple photonic sources, which makes them harder to realize experimentally.

In this manuscript, we study the feasibility of DIQKD using a simple optical circuit identified for this purpose using reinforcement learning~\cite{valcarce_automated_2023}. We analyze the security of this circuit under realistic conditions of loss and noise using state-of-the art methods involving noisy pre-processing \cite{ho_noisy_2020,Woodhead_2021,Sekatski_2021} and the direct evaluation of the quantum conditional von Neumann entropy \cite{brown_device-independent_2021}. In turn, we introduce an efficient semidefinite programming (SDP) hierarchy for the evaluation of the conditional von Neumann entropy and extend finite-size secret key analysis to generic Bell scores. Together, these contributions allow us to significantly reduce the cost of finite-size effects for our circuit, and allow us to expect a key generation in a 10 hours experiment for a realistic setup efficiency $\eta=87.5\%$.

\section{Framework} \label{sec:framework}

\subsection{Protocol} \label{subsec:protocol}
Our DIQKD protocol follows the lines of Refs.~\cite{Acn2007,nadlinger_device-independent_2022}. Two distant parties, Alice and Bob, are connected by one classical and one quantum communication channel. Both channels are public and can be eavesdropped by an adversary called Eve. The classical channel (the internet for example) is authenticated, only allowing the adversary to listen to the communications without modifying them. Eve may however alter the signals on the quantum channel. Alice and Bob use the quantum channel to exchange $n$ quantum states one after the other, corresponding to $n$ distribution rounds.

After each system is distributed, Bob decides whether it belongs to a \textit{key generation round}, with probability $\gamma$, or a \textit{test rounds}, with probability $(1-\gamma)$, and announces it publicly on the classical channel. In the case of a key generation round, Bob chooses the setting $y=0$ and Alice chooses the settings $x=1$. They store their settings in local registers $X$ and $Y$. For test round, Bob chooses randomly between the settings $y\in\{1,2\}$, and Alice between the settings $x\in\{1,2\}$. These settings $x=1,2$ ($y=0,1,2$) are
implemented in Alice's (Bob's) measurement device by a corresponding physical parameter $\alpha_x$ ($\beta_y$) (in our case these parameters will correspond to choices of displacements, see below, in a polarization-based implementation it may be choices of angles on a polarization analyzer for instance). After setting their measurement choices, both parties measure their state accordingly, generating one bit each,
denoted $a$ and $b$, equal to 1 in case of photon detection and 0 otherwise. These results are stored in registers $A$ and $B$. During a key generation round, Alice further randomly switches her bit with probability $\mathrm{p}$, in order to implement noisy pre-processing \cite{ho_noisy_2020}.

After measurement of the $n$ quantum states, Alice and Bob hold the $n$ bits of their raw key $A^n$ and $B^n$ and proceed with classical processing.
First, Alice and Bob perform error correction as follows. Alice sends a syndrome $\mathrm{M}$, a small amount of bits computed from her raw key, publicly through the classical channel. Bob uses this message to attempt to reconstruct Alice’s raw key. If the retrieval of the key is successful they proceed, otherwise they abort.

The second step is privacy amplification. Now that Bob holds a copy of Alice's raw key, he estimates the Bell score $I$ and infers a bound on the smooth conditional min entropy $H_{\min}^{\epsilon_s}(A^n|X^nY^nE)$ using the entropy accumulation theorem \cite{dupuis_entropy_2020}. This allows him to compute the length $l$ of the key that can be guaranteed to be $\epsilon_\text{snd}$-close to a uniformly random string, i.e.~secret. Here, $\epsilon_\text{snd}$ quantifies the security of the final key and depends on various
parameters such as the smoothing parameter $\epsilon_s$. Finally, Alice and Bob extract the secret key by seeding a quantum-proof strong extractor with a short secret bit string and applying the corresponding hash function on Alice's raw key $A^n$.

\subsection{Key rate}

\paragraph{Asymptotic key rate}\label{subsec:asymptotic}
A complete security analysis bounds the length $l$ of the key that can be extracted by the protocol as a function of its parameters, such as the Bell score $I$, the number of rounds $n$ and the security parameter $\epsilon_\text{snd}$. As a first step towards that, it is useful to estimate the asymptotic key rate, which is the ratio $r=l/n$ between the length $l$ of the secret key that can be produced by a DI-QKD protocol and the number of rounds $n$, in the limit of large $n$. This quantity provides an upper bound on the actual finite key rate. In particular, if the asymptotic key rate vanishes, no finite size key can be extracted. 

Since we use one-way error correction, the asymptotic key rate is given by the Devetak and Winter formula \cite{devetak_distillation_2005}
\begin{equation}
    r_{DW} = H(A_1|E) - H(A_1|B_0),
    \label{eq:DW}
\end{equation}
where $A_1$ and $B_0$ are Alice and Bob's outcomes for a key generation round and $E$ stands for all the side-information available to the adversary. Here, the conditional von Neumann entropies $H(A_1|E)$ and $H(A_1|B_0)$ quantify the uncertainty that Eve and Bob respectively have on Alice's raw key bit $A_1$.

As shown in~\cite{ho_noisy_2020,Woodhead_2021,Sekatski_2021} the asymptotic key rate can be bounded from the Clauser-Horne-Shimoney-Holt (CHSH) score $S$~\cite{CHSH1969} and the pre-processing parameter $\mathrm{p}$ as $r_{DW} \geq r_0$ with
\begin{equation}\label{eq:Ho}
\begin{split}
    r_{0} &=  1 - h \left( \frac{1+\sqrt{ (S/2)^2 -1  }}{2} \right) \\
    &  + h \left( \frac{1 + \sqrt{1 - p(1-p)(8-S^2)}}{2} \right)- H(A_1|B_0).
\end{split}
\end{equation}
Since the CHSH score does not capture the full information about the statistics of a bipartite system, this bound is generally not optimal.

Remarkably, it was shown recently that the conditional von Neumann entropy $H(A_1|E)$ can be bounded from the knowledge of the complete probability distributions $\mathcal{P}(a,b|x,y)$~\cite{brown_device-independent_2021}, hence allowing for a direct computation of \cref{eq:DW}. This method formulates the problem as a double SDP hierarchy with parameters $\ell\in\mathbb{N}$ and $m\in\mathbb{N}$, whose result converges to $H(A_1|E)$ when $\ell, m \to \infty$. Here, $m$ is a parameter denoting the number of divisions of the segment $[0,1]$ used to bound the logarithm appearing in a formulation of the von Neumann entropy and $\ell$ corresponds to the level of the Navascués-Pironio-Acin (NPA) hierarchy~\cite{Navascues07}, namely the highest order of the polynomial of operators (of observable) taken into account. The higher these two parameters are, the closer the estimation is to the actual value, but the larger the semi-definite program is and the costlier the computation. For instance, when considering a bipartite scenario with two binary measurements per party, the SDP matrix is of size $(5+4m)^\ell\times(5+4m)^\ell$, which grows quickly with $m$. This directly impacts the SDP solver's memory requirement and solving time, effectively restricting the precision of the computation. This complexity has motivated the search for improved ways of computing lower bounds on the von Neumann entropy. For instance, Ref.~\cite{Kossmann25} was able to reduce the matrix size for the same scenario to $(5+2m)^\ell\times(5+2m)^\ell$, which still grows quickly with $m$.

However, this complexity is not necessary. In \cref{app:block}, we show how the von Neumann entropy $H(A_1|E)$ can be bounded with a converging SDP hierarchy parametrized by the same parameters $\ell$ and $m$, but whose SDP matrix size is independent of $m$. This hierarchy makes use of several SDP blocks, inducing a memory and computational cost that grows, for any $\ell$, only linearly with $m$. We refer to this SDP hierarchy as the \textit{block hierarchy}. Its improved scaling allows obtaining tighter lower bounds $H(A_1|E)_{\text{block},\ell,m}$ on $H(A_1|E)$ than the previous ones for the same computational effort. In this work, we thus lower bound the asymptotic key rate by
\begin{equation}
\label{eq:block_bound}
    r_{\text{block},\ell,m} = H(A_1|E)_{\text{block},\ell,m} - H(A_1|B_0).
\end{equation}
with the choice of $\ell$ and $m$ described in the result section.

\paragraph{Finite size analysis} \label{subsec:finitesize}
Since any experiment produces a finite amount of data, the asymptotic key rate $r_{DW}$ cannot certify the security of the key that can be produced in an actual experiment. Quantifying the key length that can be extracted in a practical and finite experiment involving $n$ rounds requires a dedicated finite size analysis~\cite{Scarani08}.

Recently, the Entropy Accumulation Theorem (EAT) was shown to be a successful framework for DI-QKD finite size analysis~\cite{dupuis_entropy_2020,ArnonFriedman2018}. In particular, this theorem allows proving security against the most general class of adversary, performing so-called \textit{coherent attacks}, while recovering the asymptotic key rate $r_{DW}$ when $n$ tends to infinity.

So far, finite size security proofs using the EAT have mostly based their security on the CHSH score as an entanglement estimator~\cite{ArnonFriedman2018,Murta19,tan_improved_2022,nadlinger_device-independent_2022,Hahn24,Twin-field_DIQKD}. But this Bell score may not always be tangent to the iso-surface $H(A_1|E) =$ cst for the statistics $\mathcal{P}(a,b|x,y)$ produced in the experiment, hence bounding $H(A_1|E)$ suboptimally. As mentioned earlier, improved entropy bounds are possible when considering the complete probability distribution $\mathcal{P}(a,b|x,y)$. We therefore develop a finite size DI-QKD analysis based on the EAT which directly depends on $\mathcal{P}(a,b|x,y)$. This is done by using a general $I$-score, which may differ from the CHSH score. Relevant $I$-scores tangent to the entropy isosurface can be obtained from the estimation of $H(A_1|E)_{\text{block},\ell,m}$ described above (see Appendix \ref{sec:Iscore} for more details). The details of our finite size analysis are provided in \cref{seq:finite_fullstat}. When the considered $I$-score differs strongly from CHSH, this can result in a significantly improved finite statistical analysis.

\subsection{Optical setup} \label{subsec:opticalsetup}
\begin{figure}
 \centering
 \includegraphics[width=0.95\linewidth]{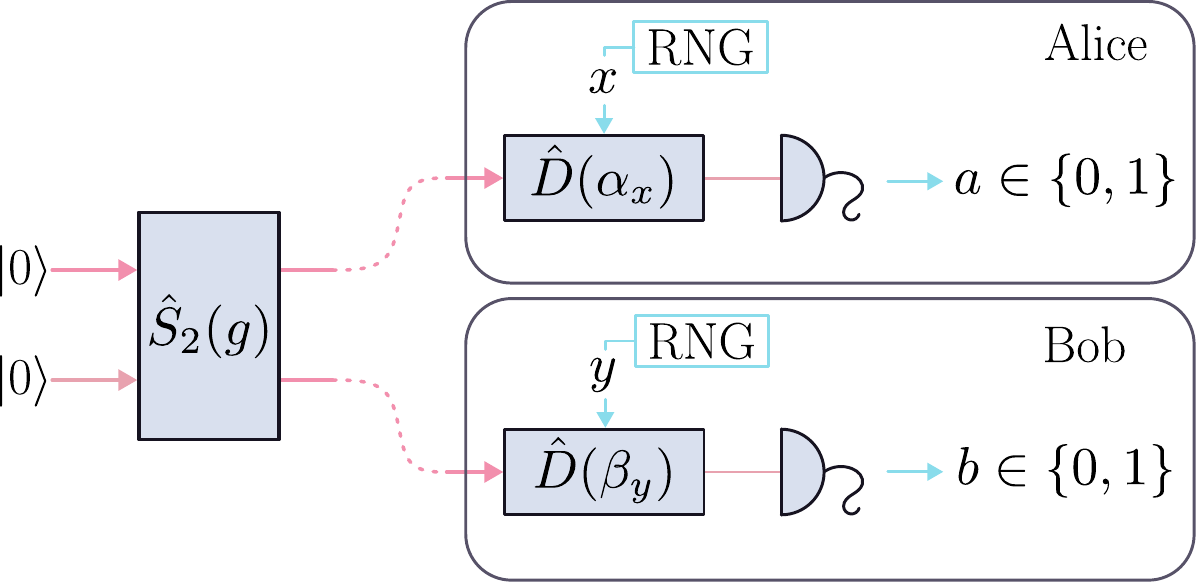}
 \caption{\textit{Optical setup.} Two vacuum modes are entangled via a two-mode squeezer and distributed to Alice and Bob. Each party uses a local random number generator (RNG) to select measurement settings ($x$ for Alice, $y$ for Bob). Their respective modes undergo displacements parameterized by $\alpha_x$ and $\beta_y$, subsequently measured via photon detectors, yielding binary outcomes $a$ and $b$ (0: no click, 1: click).}
\label{Optical setup studied}
\end{figure}

We consider the optical circuit represented in \cref{Optical setup studied}. This circuit is a slight simplification of the one obtained in Ref.~\cite{valcarce_automated_2023}, where reinforcement learning was used to identify optical circuits suited to DIQKD. Namely, squeezed-displacements have been replaced by simple displacements, with only a slight impact on the maximum CHSH score. An earlier study of this setup showed a high robustness to losses of its Bell violation, confirming its potential for the task of DIQKD~\cite{Brask12}.

The circuit involves a single frequency mode $\omega$ and two spatial modes: $(\hat{a}^{\dagger},\hat{a})$ for Alice and $(\hat{b}^{\dagger},\hat{b})$ for Bob, described by the Hilbert space
\begin{equation}
\begin{split}
\mathcal{H} &=\mathcal{H}_{\hat{a}^{\dagger},\hat{a}} \otimes \mathcal{H}_{\hat{b}^{\dagger},\hat{b}}.
\end{split}
\end{equation}

The circuit begins with a two-mode squeezer (with squeezing parameter $g$) that generates a two-mode optical state. The associated operator is 
\begin{equation}
    \hat{S}_2(g) = e^{g(\hat{a} \hat{b}-\hat{a}^{\dagger} \hat{b}^{\dagger})/2}.
    \label{eq:TMS_exp}
\end{equation}
Two-mode squeezers are known to generate a highly correlated state. One mode is sent to Alice and the other one to Bob. Both of them then apply a displacement operation. The associated operator, with displacement parameter $\alpha$ for Alice is
\begin{equation}
    \hat{D}(\alpha) = e^{\alpha\hat{a}^{\dagger}-\alpha^{*}\hat{a}}. \label{eq:Displacement_exp}
\end{equation}
Similarly, the displacement operator for Bob is $\hat{D}(\beta) = e^{\beta\hat{b}^{\dagger}-\beta^{*}\hat{b}}$.
Finally, the state is detected by means of non-photon-number-resolving detectors with efficiency $\eta$. We note that this efficiency accounts for all the losses that can occur in the two-mode squeezer, the optical fibers or in the measurements, since they can be pulled back into the efficiency of the detectors. We assign the result $0$ to a no-click event and a $1$ to a click event of the detectors. In \cref{seq:proba}, we derive the probability distribution $\mathcal{P}(a,b|x,y)$ associated with this optical circuit as function of the circuit parameters $g$, $\eta$, $\alpha$ and $\beta$.

This circuit involves only few linear optical elements and two single photon detectors. Importantly, these components have known implementations with commercially available hardware at telecom wavelength that are able to operate at a high repetition rate.

Apart from losses, the main source of noise in this optical circuit comes from the leakage of optical components into unintended optical modes. For instance, this occurs if the displacement operator is applied to an optical mode that is different from the two-mode squeezer. We provide a modelization of this noise and discuss its impact on the asymptotic key rate in \cref{seq:NoiseModels}.

\section{Results} \label{sec:results}
We first compute the asymptotic DIQKD key rate for this photonic circuit as a function of the efficiency $\eta$. For comparison, we do so both as a function of the CHSH score and of the full statistics.

In the first case, we maximize $r_0$ over the squeezing parameter $g$, Alice's displacement parameters $\alpha_1, \alpha_2$, Bob's displacement parameters $\beta_0, \beta_1, \beta_2$, and the noisy pre-processing parameter $\mathrm{p}$. This optimization is performed using standard gradient-descent and Nelder-Mead routines~\cite{nelder65}. In the second case, we maximize $r_{\mathrm{block},\ell,m}$ over the same parameters. Since this function involves solving an SDP, it is more costly to evaluate than $r_0$, so we adopt a custom optimization method, detailed in \cref{sec:I_score}. %Namely, starting from an initial point defined by the parameters $(\alpha_1,\alpha_2,\beta_0,\beta_1,\beta_2,T_g,p)$ with $T_g=\tanh(g)$, we compute $r_{\mathrm{block},\ell,m}$ and extract the corresponding $I$-score from the SDP dual. We then optimize the parameters in order to maximize the key rate for this particular $I$-score. After repeating this process a few times, we converge to the desired key rate (see \cref{seq:FirstOrderOpti} for more details). 
The SDP matrix is constructed with the monomials $\mathcal{O} = \mathcal{O}_{AB}\times \mathcal{O}_Z$, where $\mathcal{O}_{AB}=[[\hat{\mathbb{I}}, A_1, A_2]\times[\hat{\mathbb{I}}, B_1, B_2], A_1 A_2, A_2 A_1, A_1 B_2 B_1, [A_1 A_2 + A_2 A_1] \times B_1 B_2]$, $\mathcal{O}_Z=[\hat{\mathbb{I}},Z_1,Z_1^\dag,Z_2,Z_2^\dag]$ and we use the parameter $m=12$.

The result of these two optimizations is shown in \cref{fig:asymptotic_keyrate}. We note the significant improvement on the critical efficiency using this method. 
Specifically, using the full statistics to compute the key rate reduces by $\approx4\%$ the requirement on efficiency to obtain a key rate higher than $10^{-5}$.
At higher efficiency, the key rate $r_{\mathrm{block},\ell,m}$ can be up to one order  of magnitude greater than $r_0$.

\begin{figure}
  \centering
  \includesvg[width=\linewidth]{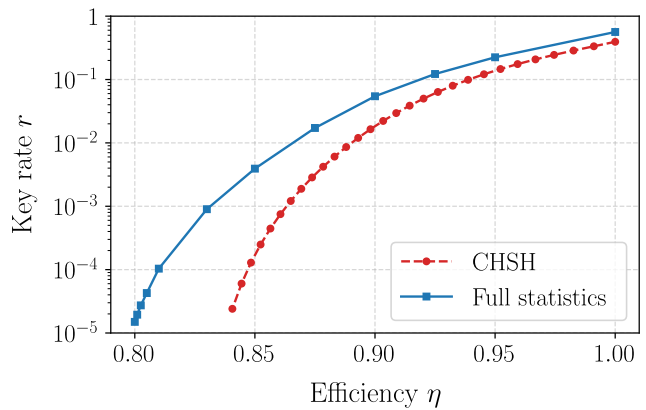}
  \caption{Lower bound on the asymptotic key rate as a function of the efficiency $\eta$, with a security based on the CHSH score (in red), and on the full statistics (in blue).}
  \label{fig:asymptotic_keyrate}
\end{figure}

In order to assess the feasibility of an experiment, we now consider the cost for finite statistics. Namely, we compute the minimum number of rounds that allows producing a key with a positive length $l>0$ as a function of the efficiency $\eta$. Again, we compare the case of a security based on the CHSH score and one using the full statistics. The first case is computed using the method provided in Ref.~\cite{nadlinger_device-independent_2022}. The second case uses our new analysis detailed in \cref{seq:finite_fullstat}. The result is shown in \cref{fig:finite_size}.

\begin{figure}
  \centering
  \includesvg[width=\linewidth]{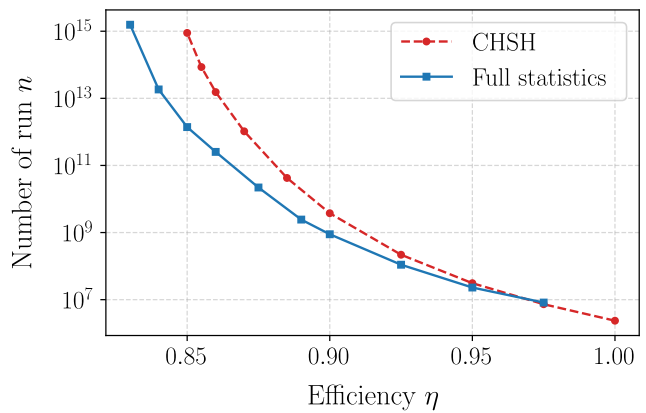}
  \caption{Minimum number of run $n$ leading to a positive key rate with $\epsilon_\text{snd}=3\times 10^{-10}$ when basing the security on the CHSH score (in red), and on full statistics (in blue), as a function of the efficiency $\eta$.}
  \label{fig:finite_size}
\end{figure}

In this figure, we see that using the full statistics significantly reduces the number of experimental runs required to extract a secret key, potentially by several orders of magnitude. This improvement, particularly for efficiencies close to the threshold, can render an otherwise impossible experiment practically feasible. Concretely, for a repetition rate of 1 MHz, our result shows that an efficiency of $\eta=87.5\%$ is sufficient to produce a secret key within approximately 8 hours, corresponding to $n=3\times 10^{10}$ rounds. This is achieved with the squeezing
$T_g=0.249$ or $2.163$dB, pre-processing parameter $\mathrm{p}=0.042$ and displacements $\alpha_1=0.024$, $\alpha_2=-0.521$, $\beta_0=0.013$, $\beta_1=-0.104$ and $\beta_2=0.0340$.

\section{Conclusion} \label{sec:conclusion}
In summary, we presented a comprehensive feasibility assessment for implementing DIQKD with a simple optical circuit, demonstrating that a photonic realization of DIQKD is within experimental reach. This was made possible by developing an efficient method for evaluating the conditional von Neumann entropy and a finite-size DIQKD security analysis that fully exploits the available statistics. As a result, we reduced the number of experimental trials required to obtain a secret key by several orders of magnitude, thereby identifying a feasibility region for DIQKD using our photonic circuit. This opens the way for a first optical demonstration of DIQKD with commercial equipment, marking a significant step toward practical device-independent secure communication.

The methods developed here extend well beyond the specific setup considered. The converging block SDP method also applies to the hierarchy introduced in~\cite{Kossmann25} for bounding the von Neumann entropy, as well as to the computation of additional entropic quantities, such as the sandwiched Rényi entropies $H_\alpha^{\uparrow}(A|E)$~\cite{Hahn24}. These generalizations may, in turn, further enhance finite key length analyses via the Rényi-EAT framework~\cite{Dupuis23,Arqand24}. Likewise, our finite-size analysis based on full statistics offers a versatile approach to improve security analyses in alternative photonic scenarios, including spontaneous parametric down-conversion setups and architectures employing on a central station. Finally, emerging proposals based on routed Bell tests \cite{LeRoyDeloison25,sekatski25,Tan24} represent another promising direction where our techniques could strengthen quantitative feasibility benchmarks.

\begin{acknowledgments}
We thank Peter Brown, Omar Fawzi, Igor Klep, Marc-Olivier Renou, Nicolas Sangouard, Xiangling Xu and Yuzhe Zhang for discussions. We thank Erik Woodhead for the Julia library ``QUantumNPA''. C.~L, X.~V. and J-D.~B acknowledge funding by the European Union’s Horizon Europe research and innovation programme under the project “Quantum Secure Networks Partnership” (QSNP, Grant Agreement No. 101114043) and by a French national quantum initiative managed by Agence Nationale de la Recherche in the framework of France 2030 with the reference ANR-22-PETQ-0009.
\end{acknowledgments}

\bibliography{references}

@article{Collins04,
  title = {A Relevant Two Qubit {{Bell}} Inequality Inequivalent to the {{CHSH}} Inequality},
  author = {Collins, Daniel and Gisin, Nicolas},
  year = 2004,
  month = jan,
  journal = {Journal of Physics A: Mathematical and General},
  volume = {37},
  number = {5},
  pages = {1775--1787},
  publisher = {IOP Publishing},
  issn = {0305-4470},
  doi = {10.1088/0305-4470/37/5/021},
  urldate = {2022-09-30},
  abstract = {We computationally investigate the complete polytope of Bell inequalities for two particles with small numbers of possible measurements and outcomes. Our approach is limited by Pitowsky's connection of this problem to the computationally hard NP problem. Despite this, we find that there are very few relevant inequivalent inequalities for small numbers. For example, in the case with three possible 2-outcome measurements on each particle, there is just one new inequality. We describe mixed 2-qubit states which violate this inequality but not the CHSH. The new inequality also illustrates a sharing of bi-partite non-locality between three qubits: something not seen using the CHSH inequality. It also inspires us to discover a class of Bell inequalities with m possible n-outcome measurements on each particle.},
  langid = {english}
}

@article{Rosset14,
  title = {Classifying 50 Years of {{Bell}} Inequalities},
  author = {Rosset, Denis and Bancal, Jean-Daniel and Gisin, Nicolas},
  year = 2014,
  month = oct,
  journal = {Journal of Physics A: Mathematical and Theoretical},
  volume = {47},
  number = {42},
  pages = {424022},
  publisher = {IOP Publishing},
  issn = {1751-8121},
  doi = {10.1088/1751-8113/47/42/424022},
  urldate = {2022-09-13},
  abstract = {Since John S Bell demonstrated the interest of studying linear combinations of probabilities in relation with the EPR paradox in 1964, Bell inequalities have lead to numerous developments. Unfortunately, the description of Bell inequalities is subject to several degeneracies, which make any exchange of information about them unnecessarily hard. Here, we analyze these degeneracies and propose a decomposition for Bell-like inequalities based on a set of reference expressions which is not affected by them. These reference expressions set a common ground for comparing Bell inequalities. We provide algorithms based on finite group theory to compute this decomposition. Implementing these algorithms allows us to set up a compendium of reference Bell-like inequalities, available online at www.faacets.com. This website constitutes a platform where registered Bell-like inequalities can be explored, new inequalities can be compared to previously-known ones and relevant information on Bell inequalities can be added in a collaborative manner. This article is part of a special issue of Journal of Physics A: Mathematical and Theoretical devoted to `50 years of Bell's theorem'.},
  langid = {english}
}

@article{Bhavsar23,
  title = {Improved Device-Independent Randomness Expansion Rates Using Two Sided Randomness},
  author = {Bhavsar, Rutvij and Ragy, Sammy and Colbeck, Roger},
  year = 2023,
  month = sep,
  journal = {New Journal of Physics},
  volume = {25},
  number = {9},
  pages = {093035},
  publisher = {IOP Publishing},
  issn = {1367-2630},
  doi = {10.1088/1367-2630/acf393},
  urldate = {2024-02-26},
  abstract = {A device-independent randomness expansion (DIRE) protocol aims to take an initial random string and generate a longer one, where the security of the protocol does not rely on knowing the inner workings of the devices used to run it. In order to do so, the protocol tests that the devices violate a Bell inequality and one then needs to bound the amount of extractable randomness in terms of the observed violation. The entropy accumulation theorem lower bounds the extractable randomness of a protocol with many rounds in terms of the single-round von Neumann entropy of any strategy achieving the observed score. Tight bounds on the von Neumann entropy are known for the one-sided randomness (i.e. where the randomness from only one party is used) when using the Clauser--Horne--Shimony--Holt game. Here we investigate the possible improvement that could be gained using the two-sided randomness. We generate upper bounds on this randomness by attempting to find the optimal eavesdropping strategy, providing analytic formulae in two cases. We additionally compute lower bounds that outperform previous ones and can be made arbitrarily tight (at the expense of more computation time). These bounds get close to our upper bounds, and hence we conjecture that our upper bounds are tight. We also consider a modified protocol in which the input randomness is recycled. This modified protocol shows the possibility of rate gains of several orders of magnitude based on recent experimental parameters, making DIRE significantly more practical. It also enables the locality loophole to be closed while expanding randomness in a way that typical spot-checking protocols do not. Video Abstract: Improved device-independent randomness expansion rates using two sided randomness},
  langid = {english}
}

@article{Klep22,
  title = {Sparse Noncommutative Polynomial Optimization},
  author = {Klep, Igor and Magron, Victor and Povh, Janez},
  year = 2022,
  month = jun,
  journal = {Mathematical Programming},
  volume = {193},
  number = {2},
  pages = {789--829},
  issn = {1436-4646},
  doi = {10.1007/s10107-020-01610-1},
  urldate = {2025-10-10},
  abstract = {This article focuses on optimization of polynomials in noncommuting variables, while taking into account sparsity in the input data. A converging hierarchy of semidefinite relaxations for eigenvalue and trace optimization is provided. This hierarchy is a noncommutative analogue of results due to Lasserre (SIAM J Optim 17(3):822--843, 2006) and Waki et al. (SIAM J Optim 17(1):218--242, 2006). The Gelfand--Naimark--Segal construction is applied to extract optimizers if flatness and irreducibility conditions are satisfied. Among the main techniques used are amalgamation results from operator algebra. The theoretical results are utilized to compute lower bounds on minimal eigenvalue of noncommutative polynomials from the literature.},
  langid = {english},
  keywords = {13J10,47N10,90C22,Eigenvalue optimization,GNS construction,Noncommutative polynomial,Semialgebraic set,Semidefinite programming,Sparsity pattern,Trace optimization}
}

@inproceedings{Lasserre06,
  title = {Convergent {{SDP-Relaxations}} for {{Polynomial Optimization}} with {{Sparsity}}},
  booktitle = {Mathematical {{Software}} - {{ICMS}} 2006},
  author = {Lasserre, Jean B.},
  editor = {Iglesias, Andr{\'e}s and Takayama, Nobuki},
  year = 2006,
  pages = {263--272},
  publisher = {Springer},
  address = {Berlin, Heidelberg},
  doi = {10.1007/11832225_27},
  abstract = {We consider a polynomial programming problem P on a compact basic semi-algebraic set K\,{$\subset$}\,{$\mathbb{R}$}n, described by m polynomial inequalities gj(X){$\geq$}0, and with criterion f\,{$\in$}\,{$\mathbb{R}$}[X]. We propose a hierarchy of semidefinite relaxations that take sparsity of the original data into account, in the spirit of those of Waki et al. [7]. The novelty with respect to [7] is that we prove convergence to the global optimum of P when the sparsity pattern satisfies a condition often encountered in large size problems of practical applications, and known as the running intersection property in graph theory.},
  isbn = {978-3-540-38086-3},
  langid = {english},
  keywords = {Chordal Graph,Matrix Completion,Polynomial Optimization,Polynomial Optimization Problem,Sparsity Pattern}
}

@misc{Kossmann25,
      title={Reliable Entropy Estimation from Observed Statistics for Device-Independent Quantum Cryptography}, 
      author={Gereon Koßmann and René Schwonnek},
      year={2025},
      eprint={2411.04858},
      archivePrefix={arXiv},
      primaryClass={quant-ph},
      url={https://arxiv.org/abs/2411.04858}, 
}

@misc{gitlabCode,
  author = {Corentin Lanore},
  howpublished = "\url{https://gitlab.com/Dorwink/diqkd_project}",
  year = {2025}
}

@article{Pironio10,
  title = {Convergent {{Relaxations}} of {{Polynomial Optimization Problems}} with {{Noncommuting Variables}}},
  author = {Pironio, S. and Navascu{\'e}s, M. and Ac{\'i}n, A.},
  year = {2010},
  month = jan,
  journal = {SIAM Journal on Optimization},
  volume = {20},
  number = {5},
  pages = {2157--2180},
  publisher = {{Society for Industrial and Applied Mathematics}},
  issn = {1052-6234},
  doi = {10.1137/090760155},
  urldate = {2022-10-03},
  abstract = {We consider optimization problems with polynomial inequality constraints in noncommuting variables. These noncommuting variables are viewed as bounded operators on a Hilbert space whose dimension is not fixed and the associated polynomial inequalities as semidefinite positivity constraints. Such problems arise naturally in quantum theory and quantum information science. To solve them, we introduce a hierarchy of semidefinite programming relaxations which generates a monotone sequence of lower bounds that converges to the optimal solution. We also introduce a criterion to detect whether the global optimum is reached at a given relaxation step and show how to extract a global optimizer from the solution of the corresponding semidefinite programming problem.},
  keywords = {81Q99,90C22,90C99,global optimization,noncommutative variables,positive polynomials,semidefinite programming}
}

@article{bezanson2017julia,
  title={Julia: A fresh approach to numerical computing},
  author={Bezanson, Jeff and Edelman, Alan and Karpinski, Stefan and Shah, Viral B},
  journal={SIAM Review},
  volume={59},
  number={1},
  pages={65--98},
  year={2017},
  publisher={SIAM}
}

@article{Lubin2023,
    author = {Miles Lubin and Oscar Dowson and Joaquim {Dias Garcia} and Joey Huchette and Beno{\^i}t Legat and Juan Pablo Vielma},
    title = {{JuMP} 1.0: {R}ecent improvements to a modeling language for mathematical optimization},
    journal = {Mathematical Programming Computation},
    year = {2023},
    doi = {10.1007/s12532-023-00239-3}
}

@misc{Woodhead,
	author = {Woodhead Erik},
	title = {QuantumNPA.jl},
	publisher = {GitHub},
	journal = {GitHub repository},
    howpublished = {\url{https://github.com/ewoodhead/QuantumNPA.jl}},
	year = {2024}
}

@manual{mosek,
   author = "MOSEK ApS",
   title = "The MOSEK Julia Fusion API manual. Version 10.1.",
   year = 2025,
   url = "https://docs.mosek.com/10.1/juliaapi/index.html#"
 }

@misc{GaussRadau,
	author = {},
	title = {FastGaussQuadrature.jl},
    howpublished = {\url{https://juliaapproximation.github.io/FastGaussQuadrature.jl/stable/}},
	year = {2025}
}

@misc{Arqand24,
      title={Generalized R\'enyi entropy accumulation theorem and generalized quantum probability estimation}, 
      author={Amir Arqand and Thomas A. Hahn and Ernest Y. -Z. Tan},
      year={2024},
      eprint={2405.05912},
      archivePrefix={arXiv},
      primaryClass={quant-ph},
      url={https://arxiv.org/abs/2405.05912}, 
}

@article{Dupuis23,
	title = {Privacy {Amplification} and {Decoupling} {Without} {Smoothing}},
	volume = {69},
	issn = {1557-9654},
	url = {https://ieeexplore.ieee.org/document/10232924},
	doi = {10.1109/TIT.2023.3301812},
	abstract = {We prove an achievability result for privacy amplification and decoupling in terms of the sandwiched Rényi entropy of order {\textbackslash}alpha ın (1,2] ; this extends previous results which worked for {\textbackslash}alpha =2 . The fact that this proof works for {\textbackslash}alpha close to 1 means that we can bypass the smooth min-entropy in the many applications where the bound comes from the fully quantum AEP or entropy accumulation, and carry out the whole proof using the Rényi entropy, thereby easily obtaining an error exponent for the final task. This effectively replaces smoothing, which is a difficult high-dimensional optimization problem, by an optimization problem over a single real parameter {\textbackslash}alpha .},
	number = {12},
	urldate = {2025-07-23},
	journal = {IEEE Transactions on Information Theory},
	author = {Dupuis, Frédéric},
	month = dec,
	year = {2023},
	keywords = {Quantum information, Privacy, Entropy, Smoothing methods, randomness extraction, privacy amplification, Optimization, Standards, Eigenvalues and eigenfunctions, Interpolation},
	pages = {7784--7792},
	file = {Snapshot:/home/bancal/Zotero/storage/4DJBK7PC/10232924.html:text/html;Submitted Version:/home/bancal/Zotero/storage/TE2ZQGLJ/Dupuis - 2023 - Privacy Amplification and Decoupling Without Smoothing.pdf:application/pdf},
}

@article{brownPrivate,
  title={},
  journal={Private communication},
  author={P. Brown},
  year={2024}
}

@article{nelder65,
  title = {A Simplex Method for Function Minimization},
  volume = {7},
  ISSN = {1460-2067},
  url = {http://dx.doi.org/10.1093/comjnl/7.4.308},
  DOI = {10.1093/comjnl/7.4.308},
  number = {4},
  journal = {The Computer Journal},
  publisher = {Oxford University Press (OUP)},
  author = {Nelder,  J. A. and Mead,  R.},
  year = {1965},
  month = jan,
  pages = {308–313}
}

@misc{ishihara25,
      title={Long-distance device-independent quantum key distribution with standard optics tools}, 
      author={Makoto Ishihara and Anthony Brendan and Wojciech Roga and Ulrik L. Andersen and Masahiro Takeoka},
      year={2025},
      eprint={2508.02262},
      archivePrefix={arXiv},
      primaryClass={quant-ph},
      url={https://arxiv.org/abs/2508.02262}, 
}

@misc{alwehaibi25,
      title={A Tractable Protocol for Detection-Loophole-Free Bell Tests over Long Distances}, 
      author={Yazeed K. Alwehaibi and Ewan Mer and Gerard J. Machado and Shang Yu and Ian A. Walmsley and Raj B. Patel},
      year={2025},
      eprint={2506.05048},
      archivePrefix={arXiv},
      primaryClass={quant-ph},
      url={https://arxiv.org/abs/2506.05048}, 
}

@article{Brask12,
  title = {Robust nonlocality tests with displacement-based measurements},
  author = {Brask, Jonatan Bohr and Chaves, Rafael},
  journal = {Phys. Rev. A},
  volume = {86},
  issue = {1},
  pages = {010103},
  numpages = {5},
  year = {2012},
  month = {Jul},
  publisher = {American Physical Society},
  doi = {10.1103/PhysRevA.86.010103},
  url = {https://link.aps.org/doi/10.1103/PhysRevA.86.010103}
}

@inproceedings{Scarani08,
  title = {Security {{Bounds}} for {{Quantum Cryptography}} with {{Finite Resources}}},
  booktitle = {Theory of {{Quantum Computation}}, {{Communication}}, and {{Cryptography}}},
  author = {Scarani, Valerio and Renner, Renato},
  editor = {Kawano, Yasuhito and Mosca, Michele},
  year = {2008},
  series = {Lecture {{Notes}} in {{Computer Science}}},
  pages = {83--95},
  publisher = {Springer},
  address = {Berlin, Heidelberg},
  doi = {10.1007/978-3-540-89304-2_8},
  abstract = {A practical quantum key distribution (QKD) protocol necessarily runs in finite time and, hence, only a finite amount of communication is exchanged. This is in contrast to most of the standard results on the security of QKD, which only hold in the limit where the number of transmitted signals approaches infinity. Here, we analyze the security of QKD under the realistic assumption that the amount of communication is finite. At the level of the general formalism, we present new results that help simplifying the actual implementation of QKD protocols: in particular, we show that symmetrization steps, which are required by certain security proofs (e.g., proofs based on de Finetti's representation theorem), can be omitted in practical implementations. Also, we demonstrate how two-way reconciliation protocols can be taken into account in the security analysis. At the level of numerical estimates, we present the bounds with finite resources for ``device-independent security'' against collective attacks.},
  isbn = {978-3-540-89304-2},
  langid = {english},
  keywords = {Collective Attack,Error Correction Scheme,Finite Resource,Quantum Cryptography,Security Proof}
}

@article{Navascues07,
  title = {Bounding the {{Set}} of {{Quantum Correlations}}},
  author = {Navascu{\'e}s, Miguel and Pironio, Stefano and Ac{\'i}n, Antonio},
  year = {2007},
  month = jan,
  journal = {Physical Review Letters},
  volume = {98},
  number = {1},
  pages = {010401},
  publisher = {American Physical Society},
  doi = {10.1103/PhysRevLett.98.010401},
  urldate = {2022-10-03},
  abstract = {We introduce a hierarchy of conditions necessarily satisfied by any distribution P{$\alpha\beta$} representing the probabilities for two separate observers to obtain outcomes {$\alpha$} and {$\beta$} when making local measurements on a shared quantum state. Each condition in this hierarchy is formulated as a semidefinite program. Among other applications, our approach can be used to obtain upper bounds on the quantum violation of an arbitrary Bell inequality. It yields, for instance, tight bounds for the violations of the Collins et al. inequalities.}
}

@article{Tan24,
  title = {Entropy Bounds for Device-Independent Quantum Key Distribution with Local Bell Test},
  author = {Tan, Ernest Y.-Z. and Wolf, Ramona},
  journal = {Phys. Rev. Lett.},
  volume = {133},
  issue = {12},
  pages = {120803},
  numpages = {7},
  year = {2024},
  month = {Sep},
  publisher = {American Physical Society},
  doi = {10.1103/PhysRevLett.133.120803},
  url = {https://link.aps.org/doi/10.1103/PhysRevLett.133.120803}
}

@misc{sekatski25,
      title={Certification of quantum correlations and DIQKD at arbitrary distances through routed Bell tests}, 
      author={Pavel Sekatski and Jef Pauwels and Edwin Peter Lobo and Stefano Pironio and Nicolas Brunner},
      year={2025},
      eprint={2502.12241},
      archivePrefix={arXiv},
      primaryClass={quant-ph},
      url={https://arxiv.org/abs/2502.12241}, 
}

@article{LeRoyDeloison25,
   title={Device-Independent Quantum Key Distribution Based on Routed Bell Tests},
   volume={6},
   ISSN={2691-3399},
   url={http://dx.doi.org/10.1103/PRXQuantum.6.020311},
   DOI={10.1103/prxquantum.6.020311},
   number={2},
   journal={PRX Quantum},
   publisher={American Physical Society (APS)},
   author={Le Roy-Deloison, Tristan and Lobo, Edwin Peter and Pauwels, Jef and Pironio, Stefano},
   year={2025},
   month=apr }

@misc{moradi25,
      title={Long-range photonic device-independent quantum key distribution using SPDC sources and linear optics}, 
      author={Morteza Moradi and Maryam Afsary and Piotr Mironowicz and Enky Oudot and Magdalena Stobińska},
      year={2025},
      eprint={2507.23254},
      archivePrefix={arXiv},
      primaryClass={quant-ph},
      url={https://arxiv.org/abs/2507.23254}, 
}

@article{Murta19,
  title = {Towards a Realization of Device-Independent Quantum Key Distribution},
  author = {Murta, G. and van Dam, S. B. and Ribeiro, J. and Hanson, R. and Wehner, S.},
  year = {2019},
  month = jul,
  journal = {Quantum Science and Technology},
  volume = {4},
  number = {3},
  pages = {035011},
  publisher = {IOP Publishing},
  issn = {2058-9565},
  doi = {10.1088/2058-9565/ab2819},
  urldate = {2024-04-11},
  abstract = {In the implementation of device-independent (DI) quantum key distribution (QKD) we are interested in maximizing the key rate, i.e. the number of key bits that can be obtained per signal, for a fixed security parameter. In the finite size regime, we furthermore also care about the minimum number of signals required before key can be obtained at all. Here, we perform a fully finite size analysis of device independent protocols using the CHSH inequality both for collective and coherent attacks. For coherent attacks, we sharpen the results recently derived in Arnon-Friedman et al (2018 Nat. Commun. 9 459), to reduce the minimum number of signals before key can be obtained. In the regime of collective attacks, where the devices are restricted to have no memory, we employ two different techniques that exploit this restriction to further reduce the number of signals. We then discuss experimental platforms in which DIQKD may be implemented. We analyse Bell violations and expected QBER achieved in previous Bell tests with distant setups and situate these parameters in the security analysis. Moreover, focusing on one of the experimental platforms, namely nitrogen-vacancy based systems, we describe experimental improvements that can lead to a DI QKD implementation in the near future.},
  langid = {english}
}

@article{Gonzalez-Ruiz24,
  title = {Device-Independent Quantum Key Distribution with Realistic Single-Photon Source Implementations},
  author = {{Gonz{\'a}lez-Ruiz}, Eva M. and {Rivera-Dean}, Javier and Cenni, Marina F. B. and S{\o}rensen, Anders S. and Ac{\'i}n, Antonio and Oudot, Enky},
  year = {2024},
  month = apr,
  journal = {Optics Express},
  volume = {32},
  number = {8},
  pages = {13181--13196},
  publisher = {Optica Publishing Group},
  issn = {1094-4087},
  doi = {10.1364/OE.497935},
  urldate = {2025-09-18},
  abstract = {Device-independent quantum key distribution (DIQKD) aims at generating secret keys between distant parties without the parties trusting their devices. We investigate a proposal for performing fully photonic DIQKD, based on single photon sources and heralding measurements at a central station placed between the two parties. We derive conditions to attain non-zero secret-key rates in terms of the photon efficiency, indistinguishability and the second order autocorrelation function of the single-photon sources. Exploiting new results on the security bound of such protocols allows us to reduce the requirements on the physical parameters of the setup. Our analysis shows that in the considered schemes, key rates of several hundreds of secret bits per second are within reach at distances of several tens of kilometers.},
  copyright = {{\copyright} 2024 Optica Publishing Group},
  langid = {english},
  keywords = {Beam splitters,Parametric down conversion,Photonic entanglement,Quantum dots,Quantum key distribution,Quantum light sources}
}

@misc{Hahn24,
      title={Bounds on Petz-R\'enyi Divergences and their Applications for Device-Independent Cryptography}, 
      author={Thomas A. Hahn and Ernest Y. -Z. Tan and Peter Brown},
      year={2024},
      eprint={2408.12313},
      archivePrefix={arXiv},
      primaryClass={quant-ph},
      url={https://arxiv.org/abs/2408.12313}, 
}

@article{CHSH1969,
  title = {Proposed Experiment to Test Local Hidden-Variable Theories},
  author = {Clauser, John F. and Horne, Michael A. and Shimony, Abner and Holt, Richard A.},
  journal = {Phys. Rev. Lett.},
  volume = {23},
  issue = {15},
  pages = {880--884},
  numpages = {0},
  year = {1969},
  month = {Oct},
  publisher = {American Physical Society},
  doi = {10.1103/PhysRevLett.23.880},
  url = {https://link.aps.org/doi/10.1103/PhysRevLett.23.880}
}

@article{Sekatski_2021, title={Device-independent quantum key distribution from generalized CHSH inequalities}, volume={5}, ISSN={2521-327X}, url={http://dx.doi.org/10.22331/q-2021-04-26-444}, DOI={10.22331/q-2021-04-26-444}, journal={Quantum}, publisher={Verein zur Forderung des Open Access Publizierens in den Quantenwissenschaften}, author={Sekatski, Pavel and Bancal, Jean-Daniel and Valcarce, Xavier and Tan, Ernest Y.-Z. and Renner, Renato and Sangouard, Nicolas}, year={2021}, month=apr, pages={444} }

@article{Woodhead_2021, title={Device-independent quantum key distribution with asymmetric CHSH inequalities}, volume={5}, ISSN={2521-327X}, url={http://dx.doi.org/10.22331/q-2021-04-26-443}, DOI={10.22331/q-2021-04-26-443}, journal={Quantum}, publisher={Verein zur Forderung des Open Access Publizierens in den Quantenwissenschaften}, author={Woodhead, Erik and Acin, Antonio and Pironio, Stefano}, year={2021}, month=apr, pages={443} }

@article{Pironio2010, 
title={Random numbers certified by Bell's theorem}, volume={464}, ISSN={1476-4687}, url={http://dx.doi.org/10.1038/nature09008}, DOI={10.1038/nature09008}, number={7291}, journal={Nature}, publisher={Springer Science and Business Media LLC}, author={Pironio, S. and Acin, A. and Massar, S. and de la Giroday, A. Boyer and Matsukevich, D. N. and Maunz, P. and Olmschenk, S. and Hayes, D. and Luo, L. and Manning, T. A. and Monroe, C.}, year={2010}, month=apr, pages={1021-1024} }

@article{renner_quantum_2023,
	title = {Quantum {Advantage} in {Cryptography}},
	volume = {61},
	issn = {0001-1452, 1533-385X},
	url = {https://arc.aiaa.org/doi/10.2514/1.J062267},
	doi = {10.2514/1.J062267},
	abstract = {Ever since its inception, cryptography has been caught in a vicious circle: Cryptographers keep inventing methods to hide information, and cryptanalysts break them, prompting cryptographers to invent even more sophisticated encryption schemes, and so on. But could it be that quantum information technology breaks this circle? At first sight, it looks as if it just lifts the competition between cryptographers and cryptanalysts to the next level. Indeed, quantum computers will render most of today’s public key cryptosystems insecure. Nonetheless, there are good reasons to believe that cryptographers will ultimately prevail over cryptanalysts. Quantum cryptography allows us to build communication schemes whose secrecy relies only on the laws of physics and some minimum assumptions about the cryptographic hardware—leaving basically no room for an attack. While we are not yet there, this paper provides an overview of the principles and state-of-the-art of quantum cryptography, as well as an assessment of current challenges and prospects for overcoming them.},
	number = {5},
	urldate = {2024-03-22},
	journal = {AIAA Journal},
	author = {Renner, Renato and Wolf, Ramona},
	month = may,
	year = {2023},
	pages = {1895--1910},
	file = {Renner et Wolf - 2023 - Quantum Advantage in Cryptography.pdf:/home/clanore/snap/zotero-snap/common/Zotero/storage/9KMWRQJK/Renner et Wolf - 2023 - Quantum Advantage in Cryptography.pdf:application/pdf},
}

@article{Scarani2009,
  doi = {10.1103/revmodphys.81.1301},
  url = {https://doi.org/10.1103/revmodphys.81.1301},
  year = {2009},
  month = sep,
  publisher = {American Physical Society ({APS})},
  volume = {81},
  number = {3},
  pages = {1301--1350},
  author = {Valerio Scarani and Helle Bechmann-Pasquinucci and Nicolas J. Cerf and Miloslav Du{\v{s}}ek and Norbert L\"{u}tkenhaus and Momtchil Peev},
  title = {The security of practical quantum key distribution},
  journal = {Reviews of Modern Physics}
}

@article{ArnonFriedman2018,
  doi = {10.1038/s41467-017-02307-4},
  url = {https://doi.org/10.1038/s41467-017-02307-4},
  year = {2018},
  month = jan,
  publisher = {Springer Science and Business Media {LLC}},
  volume = {9},
  number = {1},
  author = {Rotem Arnon-Friedman and Fr{\'{e}}d{\'{e}}ric Dupuis and Omar Fawzi and Renato Renner and Thomas Vidick},
  title = {Practical device-independent quantum cryptography via entropy accumulation},
  journal = {Nature Communications}
}

@article{Vazirani2014,
  doi = {10.1103/physrevlett.113.140501},
  url = {https://doi.org/10.1103/physrevlett.113.140501},
  year = {2014},
  month = sep,
  publisher = {American Physical Society ({APS})},
  volume = {113},
  number = {14},
  author = {Umesh Vazirani and Thomas Vidick},
  title = {Fully Device-Independent Quantum Key Distribution},
  journal = {Physical Review Letters}
}

@article{Acn2007,
  doi = {10.1103/physrevlett.98.230501},
  url = {https://doi.org/10.1103/physrevlett.98.230501},
  year = {2007},
  month = jun,
  publisher = {American Physical Society ({APS})},
  volume = {98},
  number = {23},
  author = {Antonio Ac{\'{\i}}n and Nicolas Brunner and Nicolas Gisin and Serge Massar and Stefano Pironio and Valerio Scarani},
  title = {Device-Independent Security of Quantum Cryptography against Collective Attacks},
  journal = {Physical Review Letters}
}

@article{Ekert91,
	author = {Ekert, Artur K.},
	date-added = {2022-03-01 12:27:52 +0100},
	date-modified = {2022-03-01 12:27:54 +0100},
	doi = {10.1103/PhysRevLett.67.661},
	issue = {6},
	journal = {Phys. Rev. Lett.},
	keywords = {DRLAlgoComparison ppo reinforcement_learning},
	month = {Aug},
	numpages = {0},
	pages = {661--663},
	publisher = {American Physical Society},
	title = {Quantum cryptography based on {Bell's} theorem},
	volume = {67},
	year = {1991},
	Bdsk-Url-1 = {https://doi.org/10.1103/PhysRevLett.67.661}}

@Article{makarov-nature,
author={Lydersen, Lars
and Wiechers, Carlos
and Wittmann, Christoffer
and Elser, Dominique
and Skaar, Johannes
and Makarov, Vadim},
title={Hacking commercial quantum cryptography systems by tailored bright illumination},
journal={Nature Photonics},
year={2010},
month={Oct},
day={01},
volume={4},
number={10},
pages={686-689},
abstract={The peculiar properties of quantum mechanics allow two remote parties to communicate a private, secret key, which is protected from eavesdropping by the laws of physics1,2,3,4. So-called quantum key distribution (QKD) implementations always rely on detectors to measure the relevant quantum property of single photons5. Here we demonstrate experimentally that the detectors in two commercially available QKD systems can be fully remote-controlled using specially tailored bright illumination. This makes it possible to tracelessly acquire the full secret key; we propose an eavesdropping apparatus built from off-the-shelf components. The loophole is likely to be present in most QKD systems using avalanche photodiodes to detect single photons. We believe that our findings are crucial for strengthening the security of practical QKD, by identifying and patching technological deficiencies.},
issn={1749-4893},
doi={10.1038/nphoton.2010.214},
url={https://doi.org/10.1038/nphoton.2010.214}
}

@Article{makarov-nature2,
author={Gerhardt, Ilja
and Liu, Qin
and Lamas-Linares, Ant{\'i}a
and Skaar, Johannes
and Kurtsiefer, Christian
and Makarov, Vadim},
title={Full-field implementation of a perfect eavesdropper on a quantum cryptography system},
journal={Nature Communications},
year={2011},
month={Jun},
day={14},
volume={2},
number={1},
pages={349},
abstract={Quantum key distribution (QKD) allows two remote parties to grow a shared secret key. Its security is founded on the principles of quantum mechanics, but in reality it significantly relies on the physical implementation. Technological imperfections of QKD systems have been previously explored, but no attack on an established QKD connection has been realized so far. Here we show the first full-field implementation of a complete attack on a running QKD connection. An installed eavesdropper obtains the entire 'secret' key, while none of the parameters monitored by the legitimate parties indicate a security breach. This confirms that non-idealities in physical implementations of QKD can be fully practically exploitable, and must be given increased scrutiny if quantum cryptography is to become highly secure.},
issn={2041-1723},
doi={10.1038/ncomms1348},
url={https://doi.org/10.1038/ncomms1348}
}

@article{review-pan,
  title = {Secure quantum key distribution with realistic devices},
  author = {Xu, Feihu and Ma, Xiongfeng and Zhang, Qiang and Lo, Hoi-Kwong and Pan, Jian-Wei},
  journal = {Rev. Mod. Phys.},
  volume = {92},
  issue = {2},
  pages = {025002},
  numpages = {60},
  year = {2020},
  month = {May},
  publisher = {American Physical Society},
  doi = {10.1103/RevModPhys.92.025002},
  url = {https://link.aps.org/doi/10.1103/RevModPhys.92.025002}
}

@article{DIQKDexp1,
	doi = {10.1038/s41586-022-04891-y},
	year = 2022,
	month = {jul},
	publisher = {Springer Science and Business Media {LLC}},
	volume = {607},
	number = {7920},
	pages = {687--691},
	author = {Wei Zhang and Tim van Leent and Kai Redeker and Robert Garthoff and Ren{\'{e}} Schwonnek and Florian Fertig and Sebastian Eppelt and Wenjamin Rosenfeld and Valerio Scarani and Charles C.-W. Lim and Harald Weinfurter},
	title = {A device-independent quantum key distribution system for distant users},
	journal = {Nature}
}

@article{valcarce_automated_2023,
	title = {Automated design of quantum-optical experiments for device-independent quantum key distribution},
	volume = {107},
	issn = {2469-9926, 2469-9934},
	url = {https://link.aps.org/doi/10.1103/PhysRevA.107.062607},
	doi = {10.1103/PhysRevA.107.062607},
	number = {6},
	urldate = {2023-08-29},
	journal = {Physical Review A},
	author = {Valcarce, X. and Sekatski, P. and Gouzien, E. and Melnikov, A. and Sangouard, N.},
	month = jun,
	year = {2023},
	pages = {062607},
	file = {Valcarce et al. - 2023 - Automated design of quantum-optical experiments fo.pdf:/home/clanore/snap/zotero-snap/common/Zotero/storage/XPBIXV2G/Valcarce et al. - 2023 - Automated design of quantum-optical experiments fo.pdf:application/pdf},
}

@article{DIQKDexp2,
	doi = {10.1038/s41586-022-04941-5},
	year = 2022,
	month = {jul},
	publisher = {Springer Science and Business Media {LLC}},
	volume = {607},
	number = {7920},
	pages = {682--686},
	author = {D. P. Nadlinger and P. Drmota and B. C. Nichol and G. Araneda and D. Main and R. Srinivas and D. M. Lucas and C. J. Ballance and K. Ivanov and E. Y.-Z. Tan and P. Sekatski and R. L. Urbanke and R. Renner and N. Sangouard and J.-D. Bancal},
	title = {Experimental quantum key distribution certified by Bell{\textquotesingle}s theorem},
	journal = {Nature}
}

@article{DIQKDexp3,
  title = {Toward a Photonic Demonstration of Device-Independent Quantum Key Distribution},
  author = {Liu, Wen-Zhao and Zhang, Yu-Zhe and Zhen, Yi-Zheng and Li, Ming-Han and Liu, Yang and Fan, Jingyun and Xu, Feihu and Zhang, Qiang and Pan, Jian-Wei},
  journal = {Phys. Rev. Lett.},
  volume = {129},
  issue = {5},
  pages = {050502},
  numpages = {7},
  year = {2022},
  month = {Jul},
  publisher = {American Physical Society},
  doi = {10.1103/PhysRevLett.129.050502},
  url = {https://link.aps.org/doi/10.1103/PhysRevLett.129.050502}
}

@article{reviewBell,
  title = {Bell nonlocality},
  author = {Brunner, Nicolas and Cavalcanti, Daniel and Pironio, Stefano and Scarani, Valerio and Wehner, Stephanie},
  journal = {Rev. Mod. Phys.},
  volume = {86},
  issue = {2},
  pages = {419--478},
  numpages = {60},
  year = {2014},
  month = {Apr},
  publisher = {American Physical Society},
  doi = {10.1103/RevModPhys.86.419},
  url = {https://link.aps.org/doi/10.1103/RevModPhys.86.419}
}

@Article{Liu2021,
author={Liu, Wen-Zhao
and Li, Ming-Han
and Ragy, Sammy
and Zhao, Si-Ran
and Bai, Bing
and Liu, Yang
and Brown, Peter J.
and Zhang, Jun
and Colbeck, Roger
and Fan, Jingyun
and Zhang, Qiang
and Pan, Jian-Wei},
title={Device-independent randomness expansion against quantum side information},
journal={Nature Physics},
year={2021},
month={Apr},
day={01},
volume={17},
number={4},
pages={448-451},
issn={1745-2481},
doi={10.1038/s41567-020-01147-2},
url={https://doi.org/10.1038/s41567-020-01147-2}
}

@article{devetak_distillation_2005,
	title = {Distillation of secret key and entanglement from quantum states},
	volume = {461},
	issn = {1364-5021, 1471-2946},
	url = {https://royalsocietypublishing.org/doi/10.1098/rspa.2004.1372},
	doi = {10.1098/rspa.2004.1372},
	abstract = {We study and solve the problem of distilling a secret key from quantum states representing correlation between two parties (Alice and Bob) and an eavesdropper (Eve) via one–way public discussion: we prove a coding theorem to achieve the ‘wire–tapper’ bound, the difference of the mutual information Alice–Bob and that of Alice–Eve, for so–called classical–quantum–quantum–correlations, via one–way public communication. This result yields information–theoretic formulae for the distillable secret key, giving ‘ultimate’ key rate bounds if Eve is assumed to possess a purification of Alice and Bob's joint state. Specializing our protocol somewhat and making it coherent leads us to a protocol of entanglement distillation via one–way LOCC (local operations and classical communication) which is asymptotically optimal: in fact we prove the so–called ‘hashing inequality’, which says that the coherent information (i.e. the negative conditional von Neumann entropy) is an achievable Einstein–Podolsky–Rosen rate. This result is known to imply a whole set of distillation and capacity formulae, which we briefly review.},
	number = {2053},
	urldate = {2023-01-12},
	journal = {Proceedings of the Royal Society A: Mathematical, Physical and Engineering Sciences},
	author = {Devetak, Igor and Winter, Andreas},
	month = jan,
	year = {2005},
	pages = {207--235},
	file = {devetak2005.pdf:/home/clanore/Bureau/project_1/Articles/devetak2005.pdf:application/pdf},
}

@article{ho_noisy_2020,
	title = {Noisy pre-processing facilitating a photonic realisation of device-independent quantum key distribution},
	volume = {124},
	issn = {0031-9007, 1079-7114},
	url = {http://arxiv.org/abs/2005.13015},
	doi = {10.1103/PhysRevLett.124.230502},
	abstract = {Device-independent quantum key distribution provides security even when the equipment used to communicate over the quantum channel is largely uncharacterized. An experimental demonstration of device-independent quantum key distribution is however challenging. A central obstacle in photonic implementations is that the global detection efficiency, i.e., the probability that the signals sent over the quantum channel are successfully received, must be above a certain threshold. We here propose a method to significantly relax this threshold, while maintaining provable device-independent security. This is achieved with a protocol that adds artificial noise, which cannot be known or controlled by an adversary, to the initial measurement data (the raw key). Focusing on a realistic photonic setup using a source based on spontaneous parametric down conversion, we give explicit bounds on the minimal required global detection efficiency.},
	number = {23},
	urldate = {2023-01-12},
	journal = {Physical Review Letters},
	author = {Ho, M. and Sekatski, P. and Tan, E. Y.-Z. and Renner, R. and Bancal, J.-D. and Sangouard, N.},
	month = jun,
	year = {2020},
	note = {arXiv:2005.13015 [quant-ph]},
	keywords = {Quantum Physics},
	pages = {230502},
	file = {DIKQD_switch_proba_key_rate.pdf:/home/clanore/Bureau/project_1/Articles/DIKQD_switch_proba_key_rate.pdf:application/pdf},
}

@misc{Twin-field_DIQKD,
  author = {Steffinlongo,  Anna and Navarro,  Mariana and Cenni,  Marina and Valcarce,  Xavier and Acín,  Antonio and Oudot,  Enky},
  title = {Long-distance device-independent quantum key distribution using single-photon entanglement},
  eprint       = {arXiv:2409.17075},
  archivePrefix= {arXiv},
  primaryClass = {quant-ph},
  year         = {2024},
  doi = {10.48550/ARXIV.2409.17075},
  url = {https://arxiv.org/abs/2409.17075},
  copyright = {arXiv.org perpetual,  non-exclusive license}
}

@misc{IPopt,
	author = {},
	title = {IPopt.jl},
	publisher = {GitHub},
	journal = {GitHub repository},
    howpublished = {\url{https://github.com/jump-dev/Ipopt.jl}},
	year = {2025}
}

@article{Schumakerarticle,
author = {Schumaker, Larry I.},
title = {On Shape Preserving Quadratic Spline Interpolation},
journal = {SIAM Journal on Numerical Analysis},
volume = {20},
number = {4},
pages = {854-864},
year = {1983},
doi = {10.1137/0720057},

URL = { 
        https://doi.org/10.1137/0720057
},
eprint = { 
        https://doi.org/10.1137/0720057
}
,
    abstract = { In this paper we discuss the design of algorithms for interpolating discrete data using \$C^1 \$-quadratic splines in such a way that the monotonicity and/or convexity of the data is preserved. The analysis culminates in an interactive algorithm which takes full advantage of the flexibility which quadratic splines permit. }
}

@article{Schumaker,
  title = {On Shape Preserving Quadratic Spline Interpolation},
  volume = {20},
  ISSN = {1095-7170},
  url = {http://dx.doi.org/10.1137/0720057},
  DOI = {10.1137/0720057},
  number = {4},
  journal = {SIAM Journal on Numerical Analysis},
  publisher = {Society for Industrial & Applied Mathematics (SIAM)},
  author = {Schumaker,  Larry I.},
  year = {1983},
  month = aug,
  pages = {854–864}
}

@article{nadlinger_device-independent_2022,
	title = {Device-{Independent} {Quantum} {Key} {Distribution}},
	volume = {607},
	issn = {0028-0836, 1476-4687},
	url = {http://arxiv.org/abs/2109.14600},
	doi = {10.1038/s41586-022-04941-5},
	abstract = {Cryptographic key exchange protocols traditionally rely on computational conjectures such as the hardness of prime factorisation to provide security against eavesdropping attacks. Remarkably, quantum key distribution protocols like the one proposed by Bennett and Brassard provide information-theoretic security against such attacks, a much stronger form of security unreachable by classical means. However, quantum protocols realised so far are subject to a new class of attacks exploiting implementation defects in the physical devices involved, as demonstrated in numerous ingenious experiments. Following the pioneering work of Ekert proposing the use of entanglement to bound an adversary's information from Bell's theorem, we present here the experimental realisation of a complete quantum key distribution protocol immune to these vulnerabilities. We achieve this by combining theoretical developments on finite-statistics analysis, error correction, and privacy amplification, with an event-ready scheme enabling the rapid generation of high-fidelity entanglement between two trapped-ion qubits connected by an optical fibre link. The secrecy of our key is guaranteed device-independently: it is based on the validity of quantum theory, and certified by measurement statistics observed during the experiment. Our result shows that provably secure cryptography with real-world devices is possible, and paves the way for further quantum information applications based on the device-independence principle.},
	number = {7920},
	urldate = {2023-01-12},
	journal = {Nature},
	author = {Nadlinger, D. P. and Drmota, P. and Nichol, B. C. and Araneda, G. and Main, D. and Srinivas, R. and Lucas, D. M. and Ballance, C. J. and Ivanov, K. and Tan, E. Y.-Z. and Sekatski, P. and Urbanke, R. L. and Renner, R. and Sangouard, N. and Bancal, J.-D.},
	month = jul,
	year = {2022},
	note = {arXiv:2109.14600 [quant-ph]},
	keywords = {Quantum Physics, Computer Science - Cryptography and Security},
	pages = {682--686},
	file = {DIQKD_BANCAL_SANGOUARD.pdf:/home/clanore/Bureau/project_1/Articles/DIQKD_BANCAL_SANGOUARD.pdf:application/pdf},
}

@article{dupuis_entropy_2020,
	title = {Entropy accumulation},
	volume = {379},
	issn = {0010-3616, 1432-0916},
	url = {http://arxiv.org/abs/1607.01796},
	doi = {10.1007/s00220-020-03839-5},
	abstract = {We ask the question whether entropy accumulates, in the sense that the operationally relevant total uncertainty about an n-partite system A = (A1, . . . An) corresponds to the sum of the entropies of its parts Ai. The Asymptotic Equipartition Property implies that this is indeed the case to ﬁrst order in n — under the assumption that the parts Ai are identical and independent of each other. Here we show that entropy accumulation occurs more generally, i.e., without an independence assumption, provided one quantiﬁes the uncertainty about the individual systems Ai by the von Neumann entropy of suitably chosen conditional states. The analysis of a large system can hence be reduced to the study of its parts. This is relevant for applications. In device-independent cryptography, for instance, the approach yields essentially optimal security bounds valid for general attacks, as shown by Arnon-Friedman et al. [5].},
	number = {3},
	urldate = {2023-01-12},
	journal = {Communications in Mathematical Physics},
	author = {Dupuis, Frederic and Fawzi, Omar and Renner, Renato},
	month = nov,
	year = {2020},
	note = {arXiv:1607.01796 [quant-ph]},
	keywords = {Quantum Physics, Computer Science - Information Theory},
	pages = {867--913},
	file = {2106.13692.pdf:/home/clanore/Zotero/storage/8SUTX4LK/2106.13692.pdf:application/pdf;Entropy_Accumulation_Theorem.pdf:/home/clanore/Bureau/project_1/Articles/Entropy_Accumulation_Theorem.pdf:application/pdf},
}

@article{tan_improved_2022,
	title = {Improved {DIQKD} protocols with finite-size analysis},
	volume = {6},
	issn = {2521-327X},
	url = {http://arxiv.org/abs/2012.08714},
	doi = {10.22331/q-2022-12-22-880},
	abstract = {The security of ﬁnite-length keys is essential for the implementation of device-independent quantum key distribution (DIQKD). Presently, there are several ﬁnite-size DIQKD security proofs, but they are mostly focused on standard DIQKD protocols and do not directly apply to the recent improved DIQKD protocols based on noisy preprocessing, random key measurements, and modiﬁed CHSH inequalities. Here, we provide a general ﬁnite-size security proof that can simultaneously encompass these approaches, using tighter ﬁnite-size bounds than previous analyses. In doing so, we develop a method to compute tight lower bounds on the asymptotic keyrate for any such DIQKD protocol with binary inputs and outputs. With this, we show that positive asymptotic keyrates are achievable up to depolarizing noise values of 9.33\%, exceeding all previously known noise thresholds. We also develop a modiﬁcation to random-key-measurement protocols, using a pre-shared seed followed by a “seed recovery” step, which yields substantially higher net key generation rates by essentially removing the sifting factor. Some of our results may also improve the keyrates of device-independent randomness expansion.},
	urldate = {2023-01-24},
	journal = {Quantum},
	author = {Tan, Ernest Y.-Z. and Sekatski, Pavel and Bancal, Jean-Daniel and Schwonnek, René and Renner, Renato and Sangouard, Nicolas and Lim, Charles C.-W.},
	month = dec,
	year = {2022},
	note = {arXiv:2012.08714 [quant-ph]},
	keywords = {Quantum Physics},
	pages = {880},
	file = {Tan et al. - 2022 - Improved DIQKD protocols with finite-size analysis.pdf:/home/clanore/Zotero/storage/DW53C4RC/Tan et al. - 2022 - Improved DIQKD protocols with finite-size analysis.pdf:application/pdf},
}

@misc{brown_device-independent_2021,
	title = {Device-independent lower bounds on the conditional von {Neumann} entropy},
	url = {http://arxiv.org/abs/2106.13692},
	abstract = {The rates of several device-independent (DI) protocols, including quantum key-distribution (QKD) and randomness expansion (RE), can be computed via an optimization of the conditional von Neumann entropy over a particular class of quantum states. In this work we introduce a numerical method to compute lower bounds on such rates. We derive a sequence of optimization problems that converge to the conditional von Neumann entropy of systems deﬁned on general separable Hilbert spaces. Using the Navascu´es-Pironio-Ac´ın hierarchy we can then relax these problems to semideﬁnite programs, giving a computationally tractable method to compute lower bounds on the rates of DI protocols. Applying our method to compute the rates of DI-RE and DI-QKD protocols we ﬁnd substantial improvements over all previous numerical techniques, demonstrating signiﬁcantly higher rates for both DI-RE and DI-QKD. In particular, for DI-QKD we show a new minimal detection eﬃciency threshold which is within the realm of current capabilities. Moreover, we demonstrate that our method is capable of converging rapidly by recovering instances of known tight analytical bounds. Finally, we note that our method is compatible with the entropy accumulation theorem and can thus be used to compute rates of ﬁnite round protocols and subsequently prove their security.},
	urldate = {2023-04-17},
	publisher = {arXiv},
	author = {Brown, Peter and Fawzi, Hamza and Fawzi, Omar},
	month = jun,
	year = {2021},
	note = {arXiv:2106.13692 [quant-ph]},
	keywords = {Quantum Physics},
	file = {Brown et al. - 2021 - Device-independent lower bounds on the conditional.pdf:/home/clanore/Zotero/storage/E4NJ7RSD/Brown et al. - 2021 - Device-independent lower bounds on the conditional.pdf:application/pdf},
}

\newpage

\onecolumngrid

\newpage

\appendix

\section{Optical setup}
This annex is devoted to the optical setup discussed in the main text. It describes the probability distribution of the setup with its derivation and a modelisation for the leaking of the optical components (two-mode squeezer and displacement) into other optical modes.

\subsection{Computation of the probabilities} \label{seq:proba}

\subsubsection{Formula sheet}

We will use these formulas for the derivation of the probability distributions of the circuit.

Displacement operator :

\begin{equation}
    \hat{D}(\alpha) = e^{\alpha\hat{a}^{\dagger}-\alpha^{*}\hat{a}}
\end{equation}

Two-mode squeezer operator :

\begin{equation}
    \hat{S}_2(r) = e^{r(\hat{a}_i\hat{a}_j-\hat{a}_i^{\dagger}\hat{a}_j^{\dagger})/2}
\end{equation}

Coherent state :

\begin{equation}
    \ket{\alpha} = \hat{D}(\alpha) \ket{0} 
\end{equation}

Commutation of creation-annihilation operator with a function :

\begin{equation}
\label{eqn:1}
    x^{\hat{a}^{\dagger} \hat{a}} f(\hat{a}^{\dagger}) = f(x \hat{a}^{\dagger}) x^{\hat{a}^{\dagger} \hat{a}}
\end{equation}

Two-mode squeezer applied on vacuum (with $T_g=\tanh(g)$) :

\begin{equation}
\label{eqn:2}
    \hat{S}_2(r) \ket{00} = \sqrt{1-T_g^2} e^{T_g \hat{a}^{\dagger} \hat{b}^{\dagger}} \ket{00}
\end{equation}

Application of a two-mode squeezer on coherent states : 

\begin{equation}
\label{eqn:3}
    e^{T_g \hat{a} \hat{b}} \ket{\alpha \beta} = e^{T_g \alpha \beta} \ket{\alpha \beta}
\end{equation}

Displacement operator applied on the vacuum :

\begin{equation}
\label{eqn:4}
    \hat{D}(\alpha) \ket{0} = e^{-\frac{\abs{\alpha}^2}{2}} e^{\alpha \hat{a}^{\dagger}} \ket{0} 
\end{equation}

Inverse and adjoint of the displacement operator :

\begin{equation}
\label{eqn:5}
   \hat{D}^{\dagger}(\alpha) = \hat{D}^{-1}(\alpha) = \hat{D}(-\alpha)
\end{equation}

Composition of two displacement operator :

\begin{equation}
\label{eqn:6}
    \hat{D}(\alpha) \hat{D}(\beta) = e^{\frac{1}{2}(\alpha \beta^* - \alpha^*\beta)} \hat{D}(\alpha + \beta)
\end{equation}

Scalar product of two coherent states :

\begin{equation}
\label{eqn:7}
    \innerproduct{\alpha}{\beta}= e^{-\frac{1}{2}( \abs{\beta}^2 + \abs{\alpha}^2 - 2 \alpha^* \beta)}
\end{equation}

\subsubsection{Probability distribution of the circuit}

In this section we compute the joint and marginal probabilitie $\mathcal{P}(00|xy)$, $\mathcal{P}_A(0|x)$ and $\mathcal{P}_b(0|y)$ for the photonic circuit described in the main text. Note that the remaining probabilities can be retrieved using the constraints $\mathcal{P}_A(a|x) = \mathcal{P}(ab|xy) + \mathcal{P}(a\bar{b}|xy)$, $\mathcal{P}_B(b|y) = \mathcal{P}(ab|xy) + \mathcal{P}(\bar{a}b|xy) $ and the sum of probability being equal to 1. Here, the settings $x$ and $y$ correspond to displacements $\alpha_x$ and $\beta_y$. In the remaining of this appendix, we parametrize the statistics as a function of these displacements, which we label simply $\alpha$ and $\beta$.

We want to compute the probability of having two simultaneous no-click events on an optical circuit that generates the state
\begin{equation}
\ket{\psi} =  \hat{D}_a(\alpha) \hat{D}_b(\beta) \hat{S}_2(r) \ket{00}.
\end{equation}

We modelize the detector's no-click by the POVM $R = 1 - \eta$ with $\eta$ the efficiency of the detectors. Notice that for the $n$-photon Fock state $\hat{\rho} = \ket{n}\bra{n}$, this produces a no-click probability $\mathcal{P}(0)= \Tr(\hat{\rho} R) = (1-\eta)^n$ as expected.

We thus want to compute
\begin{align}
    \mathcal{P}(00|\alpha,\beta) &= Tr(\hat{\rho} R^{\hat{a}^{\dagger} \hat{a} + \hat{b}^{\dagger} \hat{b}} ) = \bra{\psi} R^{\hat{a}^{\dagger} \hat{a} + \hat{b}^{\dagger} \hat{b}}  \ket{\psi}\\
    &= \bra{00} \hat{S}_2^{\dagger}(r) \hat{D}_a^{\dagger}(\alpha) \hat{D}_b^{\dagger}(\beta)
    R^{\hat{a}^{\dagger} \hat{a} + \hat{b}^{\dagger} \hat{b}} \hat{D}_a(\alpha) \hat{D}_b(\beta) \hat{S}_2(r) \ket{00}
\end{align}

We can put it in a symmetric form

\begin{equation}
    \mathcal{P}(00|\alpha,\beta) = \bra{00} \hat{S}_2^{\dagger}(r) \hat{D}_a^{\dagger}(\alpha) \hat{D}_b^{\dagger}(\beta)
    R^{\frac{\hat{a}^{\dagger} \hat{a} + \hat{b}^{\dagger} \hat{b}}{2}} 
    R^{\frac{\hat{a}^{\dagger} \hat{a} + \hat{b}^{\dagger} \hat{b}}{2}} 
    \hat{D}_a(\alpha) \hat{D}_b(\beta) \hat{S}_2(r) \ket{00}
\end{equation}

and insert the identity  ( $\mathbb{I}_d = \frac{1}{\pi^2} \int_{\mathbb{C}} \int_{\mathbb{C}} \ket{\gamma \gamma'} \bra{\gamma \gamma'} d\gamma d\gamma'$ ):

\begin{equation}
    \mathcal{P}(00|\alpha,\beta) =      \frac{1}{\pi^2} \int_{\mathbb{C}} \int_{\mathbb{C}}
      \abs{
       \bra{00} \hat{S}_2^{\dagger}(r) \hat{D}_a^{\dagger}(\alpha) \hat{D}_b^{\dagger}(\beta)
    R^{\frac{\hat{a}^{\dagger} \hat{a} + \hat{b}^{\dagger} \hat{b}}{2}} 
    \ket{\gamma \gamma'} }^2d\gamma d\gamma'.
\end{equation}

So firstly we have to compute $ \bra{00} \hat{S}_2^{\dagger}(r) \hat{D}_a^{\dagger}(\alpha) \hat{D}_b^{\dagger}(\beta)
    R^{\frac{\hat{a}^{\dagger} \hat{a} + \hat{b}^{\dagger} \hat{b}}{2}} 
    \ket{\gamma \gamma'} $.

\subsubsection{Computation}

Firstly we apply the POVM associated to a no click to coherent states

\begin{subequations}
\begin{align} \label{eq1}
R^{\frac{1}{2}(\hat{a}^{\dagger} \hat{a} + \hat{b}^{\dagger} \hat{b})} \ket{\gamma \gamma'} & = R^{\frac{1}{2}(\hat{a}^{\dagger} \hat{a} + \hat{b}^{\dagger} \hat{b})} \hat{D}_1(\gamma) \hat{D}_2(\gamma')
     \ket{00} \\
 & =_{(\ref{eqn:4})} R^{\frac{1}{2}(\hat{a}^{\dagger} \hat{a} + \hat{b}^{\dagger} \hat{b})} e^{-\frac{1}{2} ( \abs{\gamma}^2 + \abs{\gamma'}^2   )} e^{\gamma \hat{a}^{\dagger}} e^{\gamma' \hat{b}^{\dagger} } \ket{00} \\
 &= e^{-\frac{1}{2} ( \abs{\gamma}^2 + \abs{\gamma'}^2   )} (R^{\frac{1}{2}(\hat{a}^{\dagger} \hat{a})} e^{\gamma \hat{a}^{\dagger}})  (R^{\frac{1}{2}(\hat{b}^{\dagger} \hat{b})} e^{\gamma' \hat{b}^{\dagger} }) \ket{00} \\
 &=_{(\ref{eqn:1})} e^{-\frac{1}{2} ( \abs{\gamma}^2 + \abs{\gamma'}^2   )} ( e^{\sqrt{R} \gamma \hat{a}^{\dagger}} R^{\frac{1}{2}(\hat{a}^{\dagger} \hat{a})} ) ( e^{\sqrt{R} \gamma' \hat{b}^{\dagger}} R^{\frac{1}{2}(\hat{b}^{\dagger} \hat{b})} ) \ket{00} \\
 & =_{(\ref{eqn:4})} e^{-\frac{1}{2} ( \abs{\gamma}^2 + \abs{\gamma'}^2   )} e^{ \frac{R}{2} ( \abs{\gamma}^2 + \abs{\gamma'}^2   )} \hat{D}_a(\sqrt{R} \gamma) \hat{D}_b(\sqrt{R} \gamma') \ket{00}.
\end{align}
\end{subequations}

We are left with two new displacement operators on vacuum. Then we compose these displacement with the one in the initial formula

\begin{equation} \label{eq2}
    \begin{split}
    \hat{D}_b^{\dagger}(\beta) \hat{D}_a^{\dagger}(\alpha) \hat{D}_a(\sqrt{R} \gamma) \hat{D}_b(\sqrt{R} \gamma') \ket{00} &=_{(\ref{eqn:5})} \hat{D}_b(-\beta) \hat{D}_a(-\alpha) \hat{D}_a(\sqrt{R} \gamma) \hat{D}_b(\sqrt{R} \gamma') \ket{00} \\
    &=_{(\ref{eqn:6})} e^{\frac{1}{2}( -\alpha \sqrt{R} \gamma^* + \alpha^* \sqrt{R} \gamma)} e^{\frac{1}{2}( -\beta \sqrt{R} (\gamma')^* + (\beta)^* \sqrt{R} \gamma')} \ket{\sqrt{R}\gamma-\alpha, \sqrt{R}\gamma' - \beta}.
    \end{split}
\end{equation}

Now we apply the two-mode squeezer on the new displacement operators

\begin{equation} \label{eq3}
\begin{split}
\bra{00} \hat{S}_2^{\dagger}(r) \ket{\sqrt{R}\gamma-\alpha, \sqrt{R}\gamma' - \beta} &=_{(\ref{eqn:2})} 
\bra{00} \sqrt{1-T_g^2} e^{T_g \hat{a} \hat{b}} \ket{\sqrt{R}\gamma-\alpha, \sqrt{R}\gamma' - \beta} \\
&=_{(\ref{eqn:3})} \bra{00} \sqrt{1-T_g^2} e^{T_g (\sqrt{R}\gamma-\alpha)(\sqrt{R}\gamma'-\beta)} \ket{\sqrt{R}\gamma-\alpha, \sqrt{R}\gamma' - \beta}.
\end{split}
\end{equation}

And lastly we compute the dot product of the remaining ket with the bra of the vacuum

\begin{equation}
    \bra{00} \ket{\sqrt{R}\gamma-\alpha, \sqrt{R}\gamma' - \beta} =_{(\ref{eqn:7})} e^{\frac{-1}{2} (\abs{\sqrt{R}\gamma-\alpha}^2 + \abs{\sqrt{R}\gamma'-\beta}^2 )}.
\end{equation}

\subsubsection{Result}

We compose all above steps in the initial formula 

\begin{equation}
\begin{split}
   \sqrt{1-T_g^2} e^{ \frac{(R-1)}{2} ( \abs{\gamma}^2 + \abs{\gamma'}^2   )} e^{\frac{\sqrt{R}}{2}( -\alpha  \gamma^* + \alpha^*  \gamma)} e^{\frac{\sqrt{R}}{2}( -\beta (\gamma')^* + (\beta)^* \gamma')} e^{T_g (\sqrt{R}\gamma-\alpha)(\sqrt{R}\gamma'-\beta)} e^{-\frac{1}{2} (\abs{\sqrt{R}\gamma-\alpha}^2 + \abs{\sqrt{R}\gamma'-\beta}^2 )}
\end{split}
\end{equation}

and we can now simplify this expression

\begin{equation}
\begin{split}
   \sqrt{1-T_g^2} e^{ -\frac{1}{2}( \abs{\gamma}^2 + \abs{\gamma'}^2) + \sqrt{R}( \alpha^* \gamma + (\beta)^*(\gamma')) - \frac{1}{2}(\abs{\alpha}^2 + \abs{\beta}^2) + T_g(\gamma \sqrt{R} - \alpha)(\gamma' \sqrt{R} - \beta) }
\end{split}
\end{equation}

Integrating now over $\gamma,\gamma' \in \mathbb{C}$ and taking the modulus square of this expression (using a formal expression program), we finally obtain:

\begin{subequations}
\begin{align}
    \mathcal{P} ( 00 | \alpha, \beta ) &= \frac{1-T_g^2}{1-R^2 T_g^2} e^{ (|\alpha|^2 + |\beta|^2) \frac{1 - R + R^2T_g^2 - RT_g^2}{R^2 T_g^2 -1} - \frac{2 \alpha \beta T_g (R-1)^2}{R^2T_g^2 -1} }\\
    \mathcal{P} ( 0 | \alpha ) &= \frac{1-T_g^2}{1-R T_g^2} e^{ |\alpha|^2  \frac{1 - R + R^2T_g^2 - RT_g^2}{R T_g^2 -1} }\\
    \mathcal{P} ( 0 | \beta ) &= \frac{1-T_g^2}{1-R T_g^2} e^{ |\beta|^2  \frac{1 - R + R^2T_g^2 - RT_g^2}{R T_g^2 -1} }.
\end{align}
\end{subequations}

The asymptotic key rate obtained with these statistics is described in the main text.

\subsection{Noise models}
\label{seq:NoiseModels}

In this appendix, we take a look at a noise model for the optical circuit studied. The idea is to take into account the possibility that different optical parts could excite different optical mode (in frequency for example).

Initially, we have one optical mode $(\hat{a}^{\dagger},\hat{a})$ for Alice and another one $(\hat{b}^{\dagger},\hat{b})$ for Bob. 
\begin{equation*}
\begin{split}
\mathcal{H} &= \mathcal{H}_{\omega}\\
&=\mathcal{H}_{\hat{a}^{\dagger},\hat{a}} \otimes \mathcal{H}_{\hat{b}^{\dagger},\hat{b}}.
\end{split}
\end{equation*}
In the noise model, we split the modes in two : $(\hat{a}^{\dagger}_1,\hat{a}_1)$ and $(\hat{a}^{\dagger}_2,\hat{a}_2)$ for Alice, $(\hat{b}^{\dagger}_1,\hat{b}_1)$ and $(\hat{b}^{\dagger}_2,\hat{b}_2)$ for Bob. The Hilbert space can thus be written
\begin{equation*}
\begin{split}
\mathcal{H} &= \mathcal{H}_{\omega_1} \otimes \mathcal{H}_{\omega_2}\\
&= (\mathcal{H}_{\hat{a}^{\dagger}_1,\hat{a}_1} \otimes \mathcal{H}_{\hat{b}^{\dagger}_1,\hat{b}_1} ) \otimes (\mathcal{H}_{\hat{a}^{\dagger}_2,\hat{a}_2} \otimes \mathcal{H}_{\hat{b}^{\dagger}_2,\hat{b}_2}).
\end{split}
\end{equation*}

First, we model a displacement operator that leaks to another optical mode. Taking $\hat{c}\in\{\hat{a},\hat{b}\}$, the operator is
\begin{equation*}
\begin{split}
    \hat{D}_c(\alpha,\chi) &= e^{ \sqrt{\chi}(\alpha \hat{c}_1^{\dagger} - \alpha^* \hat{c}_1) + \sqrt{1-\chi}(\alpha \hat{c}_2^{\dagger} - \alpha^* \hat{c}_2) } \\
     &= (e^{ \sqrt{\chi}(\alpha \hat{c}_1^{\dagger} - \alpha^* \hat{c}_1)}) \otimes (e^{  \sqrt{1-\chi}(\alpha \hat{c}_2^{\dagger} - \alpha^* \hat{c}_2) }) \\
     &= \hat{D}_{c}(\sqrt{\chi}\alpha) \otimes \hat{D}_{c}(\sqrt{1-\chi}\alpha)
\end{split}
\end{equation*}

This operator is a displacement operator that c (Alice or Bob) apply to their two optical modes with a parameter $\chi \in [ 0,1 ]$. In the limit $\chi=1$, only the first optical mode is displaced, and in the other limit $\chi=0$, only the second optical mode is displaced. We also note that the mean number of photons is conserved: with $\alpha_1=\sqrt{\chi}\alpha $ and $×\alpha_2 =\sqrt{1-\chi}\alpha $, we have $|\alpha_1|^2+|\alpha_2|^2 = |\alpha|^2$.
\newline

Second, we model a two mode squeezer that leaks into other optical modes with parameter $\zeta\in [0,1]$:
\begin{equation*}
\begin{split}
    \hat{S}_2(g,\zeta) \ket{0000} &= \sqrt{1-\zeta T_g^2} \sqrt{1-(1-\zeta)T_g^2} e^{\sqrt{\zeta}T_g \hat{a}_1^{\dagger} \hat{b}_1^{\dagger}  + \sqrt{1-\zeta}T_g \hat{a}_2^{\dagger} \hat{b}_2^{\dagger}} \ket{0000} \\
    &= ( \sqrt{1-\zeta T_g^2}  e^{\sqrt{\zeta}T_g \hat{a}_1^{\dagger} \hat{b}_1^{\dagger}  }     \ket{00}) \otimes (  \sqrt{1-(1-\zeta)T_g^2} e^{ \sqrt{1-\zeta}T_g \hat{a}_2^{\dagger} \hat{b}_2^{\dagger}}     \ket{00} ) \\
    &= (\hat{S}_2(g') \ket{00}) \otimes (\hat{S}_2(g") \ket{00}),
\end{split}
\end{equation*}
where $g = \text{arcth}(T_g )$ and we define $g' = \text{arcth}( \sqrt{\chi} T_g )$ and $g" = \text{arcth}( \sqrt{1-\chi} T_g )$.

In the limit $\zeta=1$, only the first optical mode is two mode squeezed, and the other limit $\zeta=0$ only the second mode is two mode squeezed). Again, this operator conserves the photon number: with $T_{g1} =\sqrt{\zeta}T_g $ and $T_{g2} =\sqrt{1-\zeta}T_g  $, we have $T_{g1}^2+T_{g2}^2 = T_g^2$ so we keep the mean number of photons. So we can define the second operator applied on vacuum
\newline

We now use these two operators $\hat D_c(\alpha,\chi)$ and $\hat S_2(g,\zeta)$ to model noise which correspond to the excitation of another optical mode by the quantum operations (TMS and displacement) in two different ways.
\newline

\paragraph{Two mode squeezing on two optical modes}    

First, we consider the case when the TMS creates photons in two different optical modes and the displacement operator acts only on one of the two optical modes. This situation is illustrated in \cref{fig:noise1a}.

\begin{figure}
    \centering
    \includegraphics[width=0.7\linewidth]{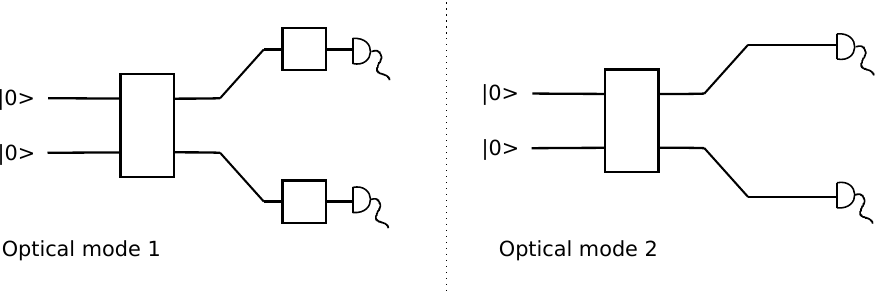}
    \caption{In the first noise model, the TMS acts on two families of optical modes, but the displacements only act on one of them.}
    \label{fig:noise1a}
\end{figure}

The state obtained is

\begin{subequations}
\begin{align}
\ket{\psi} &= 
    \hat{D}_a(\alpha,\chi = 1) \hat{D}_b(\alpha',\chi = 1) \hat{S}_2(g,\zeta) \ket{0000} \\ 
    &= (\hat{D}_a(\alpha)\otimes \mathbb{I}_d) ( \hat{D}_b(\alpha') \otimes \mathbb{I}_d)\left( (\hat{S}_2(g') \ket{00}) \otimes (\hat{S}_2(g") \ket{00}) \right) \\
    &=(  \hat{D}_a(\alpha) \hat{D}_b(\alpha') \hat{S}_2(g')    \ket{00}) \otimes  (\hat{S}_2(g")     \ket{00} ).
\end{align}
\end{subequations}

Here, the parameter $\zeta \in [0,1]$ modelizes the leaking in the second optical mode ($\zeta = 1.0$ meaning no leaking).
Here, we defined $g' = \artanh( \sqrt{\zeta} \sinh(g) )$ and $g" = \artanh( \sqrt{1-\zeta} \sinh(g) )$ so that the mean photon number $\sinh(g)^2 = \sinh(g')^2 + \sinh(g")^2$ is conserved.

\paragraph{Displacement on two optical modes}

In the second case, we suppose the TMS operation is applied on only one optical mode and the displacement operations of Bob and Alice are applied on two optical modes, see \cref{fig:noise2a}.

\begin{figure}
    \centering
    \includegraphics[width=0.7\linewidth]{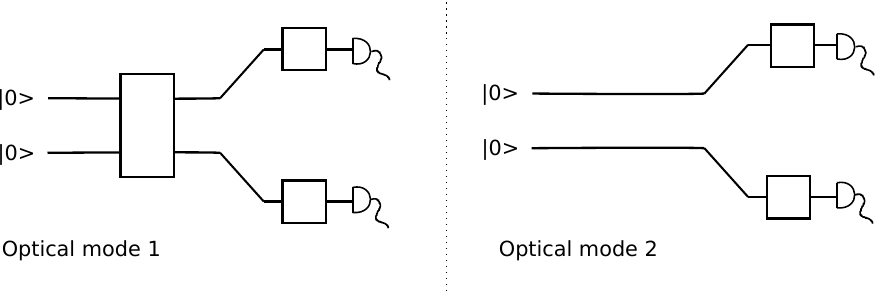}
    \caption{In the second noise model, the displacements act on two families of optical modes, but the TMS only acts on one of them.}
    \label{fig:noise2a}
\end{figure}

The state obtained is

\begin{subequations}
\begin{align}
\ket{\psi} &= 
    \hat{D}_a(\alpha,\chi) \hat{D}_b(\alpha',\chi) \hat{S}_2(g,\zeta=1) \ket{0000} \\ &= (\hat{D}_a(\sqrt{\chi}\alpha) \otimes \hat{D}_b(\sqrt{1-\chi}\alpha)) (\hat{D}_a(\sqrt{\chi}\alpha) \otimes \hat{D}_b(\sqrt{1-\chi}\alpha))   ( \hat{S}_2(g) \ket{00} )\otimes ( \ket{00}) \\
    &= (\hat{D}_a(\sqrt{\chi}\alpha) \hat{D}_b(\sqrt{\chi}\alpha') \hat{S}_2(g)\ket{00}) \otimes (\hat{D}_a(\sqrt{1-\chi}\alpha) \hat{D}_b(\sqrt{1-\chi}\alpha') \ket{00})
\end{align}
\end{subequations}

Here, the parameter $\zeta \in [0,1]$ modelizes the leaking in the second optical mode ($\zeta = 1.0$ meaning no leaking).

\paragraph{Results} 

The probability distribution associated to these new state are easily derived, given two optical modes, the probability that there is no overall click is equal to the probability that the first mode doesn't click times the probability that the second mode doesn't click as well. This allows us to compute the asymptotic key rate $r_0$ in presence of noise. The results are shown in \cref{fig:Noise_model_1,fig:Noise_model_2}.

In \cref{fig:Noise_model_1}, we can see the impact on the key rate of the parameter $\zeta$ from the leaking of the two-mode squeezer in another optical mode. In \cref{fig:Noise_model_2} we can see the impact of the parameter $\chi$ from the leaking of the displacement operators in another optical mode. We observe that the key rate is robust to both types of leakages.

\begin{figure}
 \centering
 \includegraphics[width=0.5\linewidth]{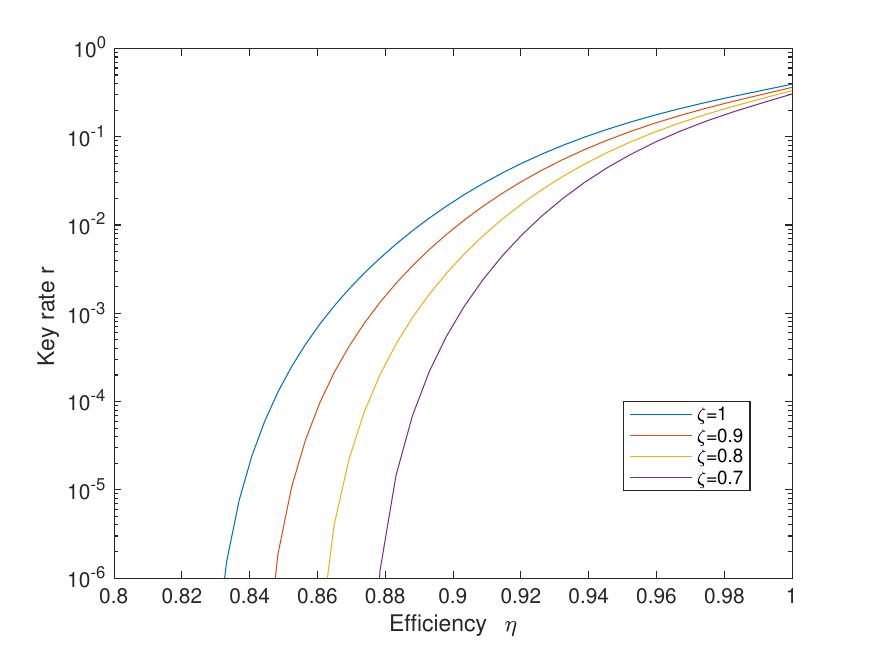}
 \caption{Lower bound of $r_0$($\eta$) with different values of the parameter $\zeta$}
 \label{fig:Noise_model_1}
\end{figure}

\begin{figure}
 \centering
 \includegraphics[width=0.5\linewidth]{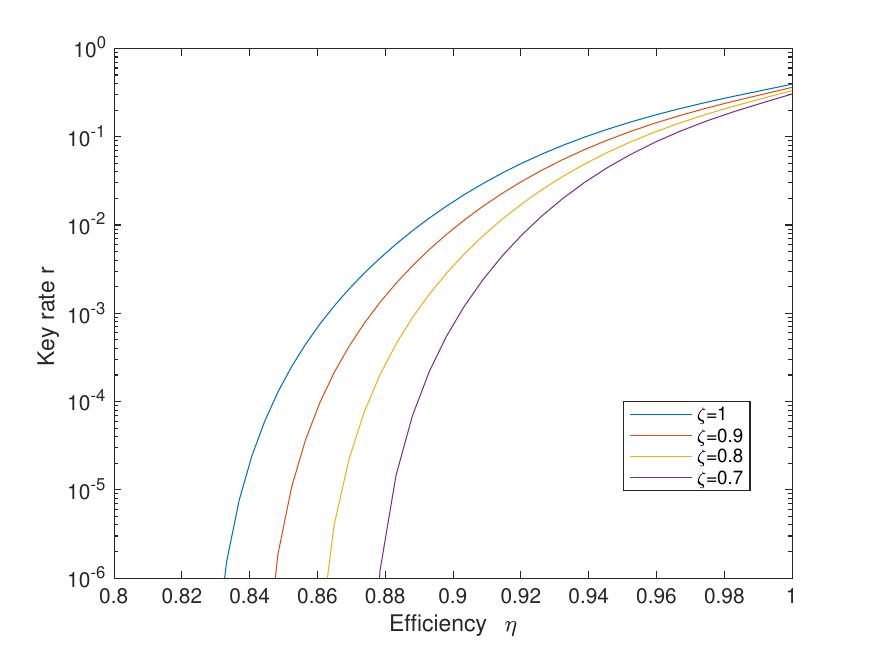}
  \caption{Lower bound of $r_0$($\eta$) with different values of the parameter $\chi$}
 \label{fig:Noise_model_2}
\end{figure}

\section{Block SDP hierarchy}\label{app:block}
Here, we recall the two SDP hierarchies introduced in~\cite{brown_device-independent_2021} to bound the conditional von Neumann entropy $H(A|E)$, which we refer to as the \textit{full} and \textit{split} hierarchies. We then propose a new SDP hierarchy to bound the same quantity that we refer to as the \textit{block} hierarchy, prove its convergence and compare it to the two previous ones.
\subsection{Definition of the \textit{full}, \textit{split} and \textit{block} hierarchies}
\label{seq:SDPchoice} 

In a scenario where three parties Alice, Bob and Eve share a quantum state $\rho_{ABE}$, Ref.~\cite{brown_device-independent_2021} showed that the von Neumann entropy $H(A|E)$ of the outcome of a binary observable $\hat A$ conditioned on a quantum register $E$ can be bounded by a series of polynomial optimizations of the form
\begin{align}\label{eq:polopt1}
\sum_{i=1}^m \frac{\omega_i}{t_i \ln(2)} \sum_{a=0}^1 \inf_{\hat Z_{i,a}} \Tr\left[ \rho_{AE} \left( \hat A^{(a)} \otimes (\hat Z_{i,a}+\hat Z_{i,a}^\dag + (1-t_i)\hat Z_{i,a}^\dag \hat Z_{i,a}) + t_i(\mathbb{I}_A\otimes \hat Z_{i,a}\hat Z^\dag_{i,a})\right)\right],
\end{align}
where $m\in\mathbb{N}$, $\omega_i$ and $t_i$ are the weights and nodes of a Gauss-Radau quadrature, $(\hat A^{(a)})^2=\hat A^{(a)}$ are the projectors of the observable $\hat A=\hat A^{(0)}-\hat A^{(1)}$, $\hat Z_{i,a}$ are non-hermitian operators acting on Eve's Hilbert space and the dagger denotes hermitian conjugation.

When only the behavior $\mathcal{P}(a,b|x,y)$ for the Alice-Bob system is known, the von Neuman entropy $H(A|E)$ can be bounded by minimizing \cref{eq:polopt1} over all quantum states $\rho_{ABE}$ and measurements $\hat A_x$, $\hat B_y$, in Hilbert spaces of arbitrary dimension, that are compatible with the behavior. When $\hat A=\hat A_1$, this leads to the following polynomial optimization:
\begin{align}\label{eq:polopt}
\begin{split}
\inf_{\rho_{ABE}, \hat A_x, \hat B_y} & \sum_{i=1}^m \frac{\omega_i}{t_i \ln(2)} \sum_{a=0}^1 \inf_{\hat Z_{a,i}} \Tr\left[ \rho_{AE} \left( \hat A_1^{(a)} \otimes (\hat Z_{i,a}+\hat Z_{i,a}^\dag + (1-t_i)\hat Z_{i,a}^\dag \hat Z_{i,a}) + t_i(\mathbb{I}_A\otimes \hat Z_{i,a}\hat Z^\dag_{i,a})\right)\right]\\
\text{s.t.}&\Tr(\rho_{AB} \frac{\id+a\hat A_x}{2} \otimes \frac{\id+b\hat B_y}{2}) = \mathcal{P}(a,b|x,y)\\
& \hat A_x ^2=\id\\
& \hat B_y ^2=\id
\end{split}
\end{align}

As explained in~\cite{brown_device-independent_2021}, this formulation generalizes easily to the case of random pre-processing. Specifically, when Alice flips her output bit $A_1$ with probability $\mathrm{p}$, like in the DIQKD protocol considered in the main text, a slight modification is required to estimate $H(A_1|E)$: the operators $\hat{A}_1^{(a)}$ are to be replaced by
\begin{equation}
\left( \hat{A}_1^{(0)}, \hat{A}_1^{(1)} \right)  \rightarrow \left( (1-p)\hat{A}_1^{(0)} + p\hat{A}_1^{(1)} , p\hat{A}_1^{(0)} + (1-p) \hat{A}_1^{(1)} \right).
\end{equation}

\subsubsection{The full hierarchy}

The polynomial optimization \cref{eq:polopt} can be relaxed into an SDP hierarchy by considering a single global optimization over all non-commuting variables $\hat A_x$, $\hat B_y$, $\hat Z_{a,i}$ and choosing an NPA level $\ell$~\cite{Pironio10}. This lead Ref.~\cite{brown_device-independent_2021} to define the following SDP bound on $H(A_1|E)$ given the statistics $\mathcal{P}$, which we refer to as the \textit{full hierarchy}:
\begin{equation}
\label{eq:fullstat1}
       \begin{split}
       H(A_1|E)(\mathcal{P})  \geq H(A_1|E)_{\mathrm{full},\ell,m}(\mathcal{P}) := \underset{\{x_j\}_j}{\inf} & \sum_{i=1}^{m}  \tr(\Gamma C_i),\\
       \text{s.t.}&\forall k, \tr(\Gamma A_k) = b_k(\mathcal{P})\\
       &\Gamma = \sum_j F_j x_j\geq 0
       \end{split}
\end{equation}
where $\Gamma$ is the NPA moment matrix, $x_j$ are moments of the non-commuting variables, $A_k$ and $F_j$ are constant matrices, $b_k(\mathcal{P})$ is a function of the distribution $\mathcal{P}$, and $C_i$ is such that
\begin{equation}
\begin{split}
\label{eq:HAE_SDP}
            \Tr(\Gamma C_i) &= \sum_a \frac{\omega_i}{t_i \ln(2)} \left(   \braket{\hat{A}_1^{(a)} \hat{Z}_{i,a}} + \braket{\hat{A}_1^{(a)} \hat{Z}^\dagger_{i,a}} + (1-t_i) \braket{\hat{A}_1^{(a)} \hat{Z}^\dagger_{i,a} \hat{Z}_{i,a}} + t_i \braket{\hat{Z}_{i,a} \hat{Z}^\dagger_{i,a}} \right).
\end{split}
\end{equation}
The dual of \cref{eq:fullstat1} can be written as
\begin{equation}
\label{eq:fullstat1dual}
\begin{split}
\underset{M, \{\lambda_k\}_k}{\sup} & \sum_k \lambda_k b_k(\mathcal{P})\\
\text{s.t.}&\forall j, \tr(M F_j) + \sum_k\lambda_k\tr(A_k F_j) = \sum_i \tr(F_j C_i)\\
& M\geq 0.
\end{split}
\end{equation}

As discussed in~\cite{brown_device-independent_2021}, the full hierarchy converges to the actual conditional entropy $H(A_1|E)$ in the limit $\ell,m\to\infty$. Furthermore, the dual solution provides a vector $\lambda$ which describes a hyperplane that is tangent to the isosurface $H(A_1|E)= $ cst, i.e.~an $I$-score as discussed in more details in \cref{sec:Iscore}.

However, this method is costly to implement because the SDP matrix size grows quickly with $\ell$ and $m$. Namely, when considering two binary observables for Alice and Bob and an NPA level $\ell$, the matrix $\Gamma$ is generated by a first row containing $(5 + 4m)^\ell$ elements. For $m=10$ and $\ell=2$ for instance, the matrix is of size $2025\times 2025$ with several millions elements. In practice, SDPs typically become difficult to solve on a laptop computer when the matrix size approaches $1000 \times 1000$. Moreover, the runtime of SDP solvers grows quickly with the matrix size. Hence, the applicability of this method is limited to small values of $m$.

To illustrate the structure of the full SDP and compare it to the variants below, we consider a toy model. Namely, let us consider a simplified scenario where Alice and Bob have just one measurement operator $\hat A$ and $\hat B$, let's set $m=2$ and $\ell=1$ and gather Eve's operators $(\hat Z_{1,1},\hat Z_{1,2},\hat Z_{1,1}^\dag,\hat Z_{1,2}^\dag)$ into $\hat Z_1$ and $(\hat Z_{2,1},\hat Z_{2,2},\hat Z_{2,1}^\dag,\hat Z_{2,2}^\dag)$ into $\hat Z_2$ for illustration purposes. In this case, the SDP matrix $\Gamma$ is of the form
\begin{equation}
\Gamma_{\text{full}} = 
\begin{pmatrix}
\braket{\hat{\mathbb{I}}} & \braket{\hat{A}} & \braket{\hat{B}} & \braket{\hat{Z}_1} & \braket{\hat{Z}_2} \\
\braket{\hat{A}} & \braket{\hat{A}} & \braket{\hat{A}\hat{B}} & \braket{\hat{A}\hat{Z}_1} & \braket{\hat{A}\hat{Z}_2} \\
\braket{\hat{B}} & \braket{\hat{A}\hat{B}} & \braket{\hat{B}} & \braket{\hat{A}\hat{Z}_1} & \braket{\hat{A}\hat{Z}_2} \\
\braket{\hat{Z}_1^\dag} & \braket{\hat{Z}_1^\dag\hat{A}} & \braket{\hat{Z}_1^\dag\hat{B}} & \braket{\hat{Z}_1^\dag\hat{Z}_1} & \braket{\hat{Z}_1^\dag\hat{Z}_2} \\
\braket{\hat{Z}_2^\dag} & \braket{\hat{Z}_2^\dag\hat{A}} & \braket{\hat{Z}_2^\dag\hat{B}} & \braket{\hat{Z}_2^\dag\hat{Z}_1} & \braket{\hat{Z}_2^\dag\hat{Z}_2}
\end{pmatrix}.
\end{equation}

\subsubsection{The split hierarchy}

The difficulty of solving the full SDP above was already raised in Ref.~\cite{brown_device-independent_2021}, who proposed the following workaround. Noticing that it is possible to obtain a lower bound on $H(A_1|E)(\mathcal{P})$ by exchanging the minimum and the sum in Eq.~\eqref{eq:fullstat1} leads to an optimization hierarchy of the following form, which we refer to as the \textit{split hierarchy}:
\begin{equation}
\label{eq:fullstat2}
       \begin{split}
       H(A_1|E)(\mathcal{P})  \geq H(A_1|E)_{\textrm{split},\ell,m}(\mathcal{P}) :=  \sum_{i=1}^{m} \underset{\{x_j\}_j}{\inf}  & \Tr(\Gamma_i \tilde{C}_i)\\
       \text{s.t.}&\forall k, \tr(\Gamma_i A_k) = b_k(\mathcal{P})\\
       &\Gamma_i = \sum_j F^i_j x_j\geq 0,
       \end{split}
\end{equation}
where $\Gamma_i$ are sub-matrices extracted from the moment matrix $\Gamma$, and $\tilde{C}_i$ are constant matrices such that $\Tr(\Gamma C_i) = \Tr(\Gamma_i \tilde{C}_i)$. Here, the optimization is split into $m$ minimizations and only the variables $x_j$ appearing in the relevant constraints need to be taken into account in each minimization. In particular, no moments involving $\hat Z_{k,a}$ operators with $k\neq i$ is required.

Since \cref{eq:fullstat2} involves several SDP optimizations, it does not admit a single SDP dual. Nevertheless, a dual can be written for each $i$ as
\begin{equation}
\label{eq:fullstat2dual}
\begin{split}
\underset{M_i, \{\lambda^i_k\}_k}{\sup} & \sum_k \lambda^i_k b_k(\mathcal{P})\\
\text{s.t.}&\forall j, \tr(M_i F^i_j) + \sum_k\lambda^i_k\tr(A_k F_j) = \tr(F_j \tilde C_i)\\
& M_i\geq 0.
\end{split}
\end{equation}

The size of each moment matrix $\Gamma_i$ for this bound scales better with the hierarchy level $\ell$ and the parameter $m$. For instance, in the same case involving two binary measurements for Alice and Bob, each SDP matrix is of size $(5+4)^\ell \times (5+4)^\ell$ elements, i.e.~$81\times81$ when $m=10,\ell=2$. The computation of a single bound now involves solving $m$ distinct SDP problems. Each of these SDP problems can however by solved independently, which allows reaching large values of $m$ without a drastic increase in computation time.

However, this method comes with several drawbacks:
\begin{enumerate}
\item The convergence to the true value $H(A_1|E)$ is not guaranteed anymore, and it can be checked that the gap can be quite significant, see Secs.~\ref{sec:compareBounds} and \ref{seq:lilbenchmark} below.
\item Since there is no single SDP dual, the method does not provide a Bell expression that can serve as a certificate for the entropy lower bound $H(A_1|E)_{\text{split},\ell,m}(\mathcal{P})$. This is an issue for finite size analysis since this Bell expression is at the root of security proofs for DIQKD protocol, such as the one derived in this manuscript.
\item In practice, we observe that some of solvers may can have trouble converging to the solution of some SDPs in the split method. This reduces the precision on the conditional entropy $H(A_1|E)_{\text{split},\ell,m}(\mathcal{P})$ and can impact the security guarantee of DIQKD.
\end{enumerate}

%First, the convergence to the true value $H(A_1|E)$ is not guaranteed anymore, and it can be checked that the gap can be quite significant, see Secs.~\ref{sec:compareBounds} and \ref{seq:lilbenchmark} below. Second, since there is no single SDP dual, the method does not provide a Bell expression that can serve as a certificate for the entropy lower bound $H(A_1|E)_{\text{split},\ell,m}(\mathcal{P})$. This is an issue for finite size analysis since this Bell expression is at the root of the security proof for the DIQKD protocol, such as the one derived in this manuscript. Finally, we observe in practice that some of solvers may can have trouble converging to the solution of some SDPs in the split method. This reduces the precision on the conditional entropy $H(A_1|E)_{\text{split},\ell,m}(\mathcal{P})$ and can impact the security guarantee of DIQKD.

In the simplified picture introduced above with $m=2$, the optimization is split in two independent optimizations, with respectively the operators $\{\hat{\mathbb{I}},\hat{A},\hat{B},\hat{Z}_1\}$ and $\{\hat{\mathbb{I}},\hat{A},\hat{B},\hat{Z}_2\}$. Computing the split bound then amounts to solving two different SDPs with the following matrices:

\begin{equation}
\begin{split}
\Gamma_{\text{split},1} = \begin{pmatrix}
\braket{\hat{\mathbb{I}}} & \braket{\hat{A_1}} & \braket{\hat{B_1}} & \braket{\hat{Z}_1} \\
\braket{\hat{A_1}} & \braket{\hat{A_1}} & \braket{\hat{A_1}\hat{B_1}} & \braket{\hat{A_1}\hat{Z}_1} \\
\braket{\hat{B_1}} & \braket{\hat{B_1}\hat{A_1}} & \braket{\hat{B_1}} & \braket{\hat{B_1}\hat{Z}_1} \\
\braket{\hat{Z}_1^\dag} & \braket{\hat{Z}_1^\dag\hat{A_1}} & \braket{\hat{Z}_1^\dag\hat{B_1}} & \braket{\hat{Z}_1^\dag\hat{Z}_1} \\
\end{pmatrix} \\[1em]
\Gamma_{\text{split},2} = \begin{pmatrix}
\braket{\hat{\mathbb{I}}} & \braket{\hat{A_2}} & \braket{\hat{B_2}} & \braket{\hat{Z}_2} \\
\braket{\hat{A_2}} & \braket{\hat{A_2}} & \braket{\hat{A_2}\hat{B_2}} & \braket{\hat{A_2}\hat{Z}_2} \\
\braket{\hat{B_2}} & \braket{\hat{B_2}\hat{A_2}} & \braket{\hat{B_2}} & \braket{\hat{B_2}\hat{Z}_2} \\
\braket{\hat{Z}_2^\dag} & \braket{\hat{Z}_2^\dag\hat{A_2}} & \braket{\hat{Z}_2^\dag\hat{B_2}} & \braket{\hat{Z}_2^\dag\hat{Z}_2} \\
\end{pmatrix}
\end{split}.
\end{equation}

\subsubsection{The block hierarchy}

Considering the polynomial optimization \cref{eq:polopt}, we observe that the optimization over the $\hat Z_{a,i}$ operators occurs within the sum over $i$, suggesting that these operators could be optimized independently of each other. Still, it is clear that their optimizations should remain consistent with a single global state $\rho_{ABE}$, and unique operators $\hat A_x$ and $\hat B_y$ for Alice and Bob. We thus propose to bound $H(A|E)$ with an SDP of the following form
\begin{equation}
\label{eq:fullstat3}
       \begin{split}
       H(A_1|E)(\mathcal{P})  \geq H(A_1|E)_{\textrm{block},\ell,m}(\mathcal{P}) :=  \underset{\{x_j\}_j}{\inf}  \sum_{i=1}^{m} & \Tr(\Gamma_i \tilde{C}_i)\\
       \text{s.t.}&\forall k, \tr(\Gamma_1 A_k) = b_k(\mathcal{P})\\
       &\forall i, \Gamma_i = \sum_j F^i_j x_j\geq 0.
       \end{split}
\end{equation}
The dual of this SDP can be expressed as
\begin{equation}
\label{eq:fullstat3dual}
\begin{split}
\underset{\{M_i\}_i, \{\lambda_k\}_k}{\sup} & \sum_k \lambda_k b_k(\mathcal{P})\\
\text{s.t.}&\forall i, j,  \tr(M_i F^i_j) + \delta_{i,1} \sum_k\lambda_k\tr(A_k F^i_j) = \sum_i \tr(F^i_j \tilde C_i)\\
&\forall i, M_i\geq 0.
\end{split}
\end{equation}

Effectively, this formulation combines the SDP matrices of the split method and solves them together in a single SDP optimization. In particular, it enforces that the moments involving no $Z$ operator appearing in several moment matrices $\Gamma_i$, such as $\langle \hat A_1 \hat A_2\rangle$, to take the same value in all matrices, as they should. At the same time, the size of the SDP blocks is unchanged, and thus remains independent of the $m$ parameter, which allows for an efficient resolution of the problem by an SDP solver. Indeed, the computational advantage of the split method with respect to the full one is preserved, as detailed in Sec.\ref{sec:compareBounds}.

By construction, we have the following inequalities:
\begin{equation}
H(A_1|E)_{\mathrm{split},\ell,m}\leq H(A_1|E)_{\mathrm{block},\ell,m}\leq H(A_1|E)_{\mathrm{full},\ell,m}.
\end{equation}
Moreover, we show in the next section that the block method actually converges to the true entropy $H(A_1|E)$, i.e.
\begin{equation}
\lim_{m\to\infty} \lim_{\ell\to\infty} H(A_1|E)_{\text{block},\ell,m} = H(A_1|E),
\end{equation}
so that its bounds are potentially as tight as the full hierarchy. Furthermore, the block SDP's dual provides a certificate for the entropy bound in the form of a single Bell inequality, which is crucial for the finite-size analysis of this manuscript as discussed earlier. Therefore, the block method combines the advantages of both the split and the full methods, providing an efficient way to bound the conditional von Neuman entropy in a tight fashion, and with a certificate.

Note that since the infimum over the $\hat Z_{i,a}$ operators in \cref{eq:polopt} lies inside the sum over both $i$ and $a$, one could also further split each $\Gamma_i$ matrix into two blocks, corresponding to each value of $a$. However, we noticed that this variant is not always beneficial in terms of computation time, because a higher level of the NPA hierarchy might be needed to obtain a good bound. Therefore, we choose here to work with $m$ blocks.

Coming back to the illustration above, the moment matrix associated to the block SDP can be put in the block-diagonal form
\begin{equation}
\Gamma_{\text{block}} = 
\begin{pmatrix}
\begin{pmatrix}
\braket{\hat{\mathbb{I}}} & \braket{\hat{A}} & \braket{\hat{B}} & \braket{\hat{Z}_1} \\
\braket{\hat{A}} & \braket{\hat{A}} & \braket{\hat{A}\hat{B}} & \braket{\hat{A}\hat{Z}_1} \\
\braket{\hat{B}} & \braket{\hat{B}\hat{A}} & \braket{\hat{B}} & \braket{\hat{B}\hat{Z}_1} \\
\braket{\hat{Z}_1^\dag} & \braket{\hat{Z}_1^\dag\hat{A}} & \braket{\hat{Z}_1^\dag\hat{B}} & \braket{\hat{Z}_1^\dag\hat{Z}_1} \\
\end{pmatrix} & 0 \\[1em]
0 & 
\begin{pmatrix}
\braket{\hat{\mathbb{I}}} & \braket{\hat{A}} & \braket{\hat{B}} & \braket{\hat{Z}_2} \\
\braket{\hat{A}} & \braket{\hat{A}} & \braket{\hat{A}\hat{B}} & \braket{\hat{A}\hat{Z}_2} \\
\braket{\hat{B}} & \braket{\hat{B}\hat{A}} & \braket{\hat{B}} & \braket{\hat{B}\hat{Z}_2} \\
\braket{\hat{Z}_2^\dag} & \braket{\hat{Z}_2^\dag\hat{A}} & \braket{\hat{Z}_2^\dag\hat{B}} & \braket{\hat{Z}_2^\dag\hat{Z}_2} \\
\end{pmatrix}
\end{pmatrix}.
\end{equation}
Like in the case of the split SDP, the size of the SDP matrices scales well with the hierarchy level $\ell$ and the parameter $m$, namely as $(5+4)^\ell\times(5+4)^\ell$ for two binary observables per party, which is independent of $m$. For $m=10$, $\ell=2$ the SDP thus involves blocks of size $81\times81$.

\subsection{Convergence of the Block SDP hierarchy}

\begin{proposition}
The block SDP hierarchy converges to the von Neumann entropy, i.e.
\begin{equation}
\lim_{m\to\infty} \lim_{\ell\to\infty}H(A_1|E)_{\text{block},\ell,m} = H(A_1|E).
\end{equation}
\end{proposition}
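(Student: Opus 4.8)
The plan is to sandwich the double limit. The inequalities $H(A_1|E)_{\mathrm{block},\ell,m}\le H(A_1|E)_{\mathrm{full},\ell,m}\le H(A_1|E)$, already established above, give $\limsup_{m}\limsup_{\ell}H(A_1|E)_{\mathrm{block},\ell,m}\le H(A_1|E)$, so everything reduces to the matching lower bound. I would obtain it in two steps: an inner step that controls the NPA level $\ell$ at fixed $m$, showing $\lim_{\ell\to\infty}H(A_1|E)_{\mathrm{block},\ell,m}=\beta_m$, where $\beta_m$ is the value of the polynomial optimization \cref{eq:polopt}; and an outer step showing $\beta_m\to H(A_1|E)$ as $m\to\infty$, which is essentially contained in~\cite{brown_device-independent_2021}.

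For the inner step, I would first note, as in~\cite{brown_device-independent_2021}, that the infimum over the operators $\hat Z_{i,a}$ in \cref{eq:polopt} may be restricted without loss of generality to operators of norm at most a fixed constant $C$ (depending only on the Gauss--Radau nodes), since the objective \cref{eq:HAE_SDP} is coercive in each $\hat Z_{i,a}$; together with $\hat A_x^2=\hat B_y^2=\id$ this makes \cref{eq:polopt} Archimedean. The key structural point is then that the block SDP \cref{eq:fullstat3}, once augmented with the localizing constraints enforcing $\|\hat Z_{i,a}\|\le C$, is exactly the \emph{correlatively sparse} NPA relaxation of \cref{eq:polopt} for the cliques $V_i=\{\hat A_x\}_x\cup\{\hat B_y\}_y\cup\{\hat Z_{i,0},\hat Z_{i,1}\}$: the objective \cref{eq:HAE_SDP} and every constraint of \cref{eq:polopt} involve only monomials supported on a single $V_i$, while distinct cliques intersect precisely in the shared set $\{\hat A_x\}_x\cup\{\hat B_y\}_y$, so the running intersection property holds. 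By the known convergence of sparse non-commutative moment / sums-of-squares hierarchies under the running intersection property, $\lim_{\ell\to\infty}H(A_1|E)_{\mathrm{block},\ell,m}=\beta_m$. Equivalently, one can argue this by hand: $\le\beta_m$ follows from $H(A_1|E)_{\mathrm{block},\ell,m}\le H(A_1|E)_{\mathrm{full},\ell,m}\to\beta_m$; for $\ge\beta_m$, take near-optimal feasible points of \cref{eq:fullstat3} at growing $\ell$, pass to a subsequence along which every block $\Gamma_i$ converges to a normalized positive functional on the relevant norm-bounded free $*$-algebra, run the GNS construction block by block to get $m$ quantum models sharing the same Alice--Bob state (this is what the shared moment variables plus the constraint $\Tr(\Gamma_1 A_k)=b_k(\mathcal P)$ force in the limit), and amalgamate them over the common Alice--Bob subalgebra (amalgamated free product) into a single model feasible for \cref{eq:polopt} whose objective is no larger than the limiting value.

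For the outer step, I would invoke~\cite{brown_device-independent_2021}: evaluated on a fixed model $(\rho_{ABE},\hat A_x,\hat B_y)$ compatible with $\mathcal P$, the quantity in \cref{eq:polopt}, after the inner minimization over the $\hat Z_{i,a}$, is exactly the $m$-node Gauss--Radau lower bound on $H(A_1|E)_\rho$, which increases to $H(A_1|E)_\rho$ with an error bounded uniformly over all models (depending only on $m$ and $\|\hat A_1\|\le1$). Taking the infimum over all compatible models gives $\beta_m\to H(A_1|E)$, and combining the two steps, $\lim_{m\to\infty}\lim_{\ell\to\infty}H(A_1|E)_{\mathrm{block},\ell,m}=\lim_{m\to\infty}\beta_m=H(A_1|E)$.

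The main obstacle is the inequality $\lim_{\ell\to\infty}H(A_1|E)_{\mathrm{block},\ell,m}\ge\beta_m$ of the inner step: one must show that the extra freedom the block relaxation grants over the full one — allowing each block to carry its own Alice--Eve correlation while only the Alice--Bob statistics are tied together — buys nothing in the limit $\ell\to\infty$, which is precisely where the correlative-sparsity / running-intersection structure enters, either as a black box or unpacked into the amalgamated-free-product construction. Secondary technical points I would need to handle: the reduction to norm-bounded $\hat Z_{i,a}$ together with the corresponding localizing matrices; the fact that the SDP infima need not be attained at finite $\ell$, calling for near-optimal points and a subsequence extraction; the routine check that no constraint of \cref{eq:polopt} secretly couples two distinct cliques; and the commuting-operator versus tensor-product model bookkeeping, already dealt with in~\cite{brown_device-independent_2021}.
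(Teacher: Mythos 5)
Your proposal is correct and follows essentially the same route as the paper's proof: identify the block SDP as the correlatively sparse noncommutative relaxation for the cliques $I_i = \{\hat A_x, \hat B_y, \hat Z_{i,a}\}_{a,x,y}$, verify the running intersection property (trivial here since all cliques share $\{\hat A_x,\hat B_y\}$), invoke a sparse-hierarchy convergence theorem (the paper cites Theorem~3.3 of~\cite{Klep22}) for the $\ell\to\infty$ limit at fixed $m$, and then appeal to the Gauss--Radau quadrature convergence of~\cite{brown_device-independent_2021} for the $m\to\infty$ limit. The extra technical care you take (the Archimedean reduction via norm bounds on $\hat Z_{i,a}$, non-attainment of infima, and the amalgamated-free-product picture underlying sparse convergence) is absorbed into the cited theorem in the paper's presentation, but the underlying argument is the same.
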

\begin{proof}
First, we notice that the objective function 
\begin{equation}
\sum_{i=1}^m \frac{\omega_i}{t_i \ln(2)} \sum_{a=0}^1 \Tr\left[ \rho_{AE} \left( \hat A_1^{(a)} \otimes (\hat Z_{i,a}+\hat Z_{i,a}^\dag + (1-t_i)\hat Z_{i,a}^\dag \hat Z_{i,a}) + t_i(\mathbb{I}_A\otimes \hat Z_{i,a}\hat Z^\dag_{i,a})\right)\right]
\end{equation}
only involves selected products of the operator variables $\hat A_x$, $\hat B_y$, $\hat Z_{a,i}$. For instance, no product of $\hat Z_{i,a}$ or $\hat Z_{i,a}^\dag$ operators with different values of $i$ or $a$ appears here. Our polynomial optimization problem \cref{eq:polopt} is thus an instance of a sparse noncommutative polynomial optimization~\cite{Klep22}. 

Second, we notice that the block SDP hierarchy \cref{eq:fullstat3} corresponds to the SDP relaxation of this sparse optimization for the families of operators $I_i=\{\hat A_x, \hat B_y, \hat Z_{i,a}\}_{a,x,y}$.

Third, we check that these families of operators satisfy the Running Intersection Property (RIP)~\cite{Lasserre06}
\begin{equation}
(I_{i-1} \cup I_i) \subset I_j, j<i
\end{equation}
for all $i=2,\ldots,m$.

Theorem 3.3 of \cite{Klep22} then guarantees that this SDP hierarchy converges to the noncommuting polynomial optimum when $\ell\to\infty$, which describes the Gauss-Radau quadrature of $H(A_1|E)$ at fixed $m$. Since the quadrature itself converges to $H(A_1|E)$ as $m\to\infty$, the whole method converges to the conditional von Neumann entropy $H(A_1|E)$.
\end{proof}

Note that the convergence of the block SDP \cref{eq:fullstat3} to the von Neumann entropy is a consequence of the structure of the polynomial optimization \cref{eq:polopt}. In particular, optimizations sharing a similar sparse structure can also be described by a converging block SDP hierarchy. This applies to the von Neumann entropy hierarchy decribed in~\cite{Kossmann25} as well as the hierarchy presented in~\cite{Hahn24} to bound Rényi entropies. The block SDP technique thereby offers a way to improve these methods significantly.

In addition to the block restriction, we notice that the numerical bounds we obtain in the cases considered in this manuscript converge to the expected entropy when the NPA level used for Eve's operators $\hat Z_{i,a}$ is restricted to 1. This further simplifies the SDP hierarchy, because only products of Alice and Bob's operators need to be considered when increasing the hierarchy level. We conjecture that the hierarchy indeed convergence under this additional restriction.

\subsection{Comparison between the different bounds}\label{sec:compareBounds}

In order to compare the full, split and block methods, we consider the problem of bounding the joint conditional von Neumann entropy $H(A_1,B_1|E)$ of Alice and Bob's first measurements' outcomes as a function of the CHSH score $S$. To our knowledge, the precise evaluation of this quantity has so far resisted an analytical resolution. Nevertheless, a numerical upper bound was given in~\cite{Bhavsar23}, conjectured to be tight. Earlier SDP bounds managed to recover this upper bound up to two digits of precision, but couldn't go further. Using the block hierarchy, we recover this upper bound from below up to numerical precision, thus confirming numerically the validity of the conjecture.

In order to bound the joint entropy $H(A_1B_1|E)$ rather than the entropy of Alice's outcomes only $H(A_1|E)$, a small modification of the optimization problem needs to be considered. Namely, four nonhermitian operators $\hat Z_{i,a,b}$ acting on Eve's Hilbert space need to be introduced for each value of $i$, one for each pair of outcomes $(a,b)$. The polynomial objective function then takes the form
\begin{align}\label{eq:poloptjoint}
\sum_{i=1}^m \frac{\omega_i}{t_i \ln(2)} \sum_{a,b=0}^1 \inf_{\hat Z_{i,a,b}} \Tr\left[ \rho_{ABE} \left( \hat A^{(a)} \otimes \hat B^{(b)}\otimes (\hat Z_{i,a,b}+\hat Z_{i,a,b}^\dag + (1-t_i)\hat Z_{i,a,b}^\dag \hat Z_{i,a,b}) + t_i(\mathbb{I}_A\otimes \mathbb{I}_B\otimes \hat Z_{i,a,b}\hat Z^\dag_{i,a,b})\right)\right].
\end{align}
The rest of the analysis is unchanged, as describe in Ref.~\cite{brown_device-independent_2021}.

In \cref{fig:plotHAB} we plot the lower bound on $H(A_1B_1|E)$ obtained when minimizing this objective function as a function of the CHSH score $S$ with the different methods. For both the block and split methods, we use the second hierarchy level for Alice and Bob, and the first one for Eve, i.e.~monomials in $\mathcal{O}_2 = [\hat{\mathbb{I}},\hat{A}_1,\hat{A}_2,\hat{A_1}\hat{A_2}, \hat{A_2}\hat{A_1}] \otimes [\hat{\mathbb{I}},\hat{B}_1,\hat{B}_2,\hat{B}_1 \hat{B}_2, \hat{B}_2 \hat{B}_1] \otimes [\hat{\mathbb{I}},\hat{Z}]$, where $\hat Z$ denotes all the $\hat Z$ operators that are relevant to the considered SDP matrix. Due to the higher computational cost of the full SDP method, we restrict the hierarchy level to level 1 for all parties, i.e.~choose monomials in $\mathcal{O}_1 = [\hat{\mathbb{I}},\hat{A}_1,\hat{A}_2] \otimes [\hat{\mathbb{I}},\hat{B}_1,\hat{B}_2] \otimes [\hat{\mathbb{I}},\hat{Z}]$. Despite this restriction, computation cost limits the full SDP computations to $m\leq 12$. For this reason, the bounds obtained with the full SDP are generally lower than the other ones obtained by the block method (and take longer to compute). While the full and split lower bounds show a finite gap compared to the upper bound, the bound computed with the block bound closes this bound for all CHSH scores.

\cref{fig:plotHABdetails} shows the details of the convergence of the block bound to the upper bound as $m$ increases for the CHSH value $S=2.3$. A comparison of the computation time for the different methods is also provided, showing important computational cost of the full method. Surprisingly, the block method is the fastest of the three. We propose another comparison between the three SDP hierarchies in the context of key rate optimization in \cref{seq:lilbenchmark}.

\begin{figure}
\includegraphics[width=0.6\linewidth]{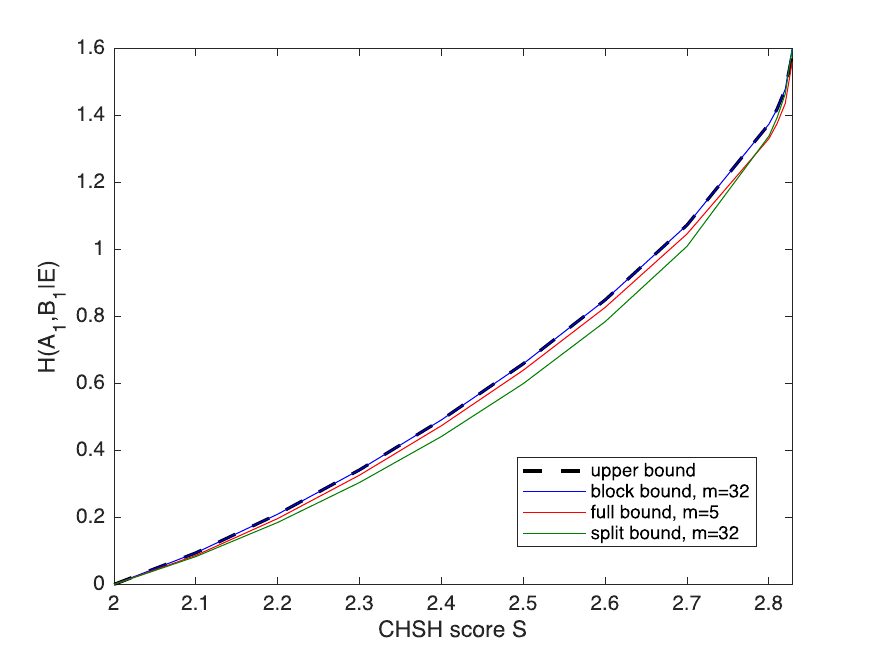}
\caption{Bounds on the joint von Neumann entropy of Alice and Bob's outcomes as a function of the CHSH score. Here, the computation time is limited to $\sim$3 minutes per point. Both the full and split bounds are suboptimal. The lower bound obtained with the block method matches the numerical upper bound from Ref.~\cite{Bhavsar23} up to numerical precision, as shown in \cref{fig:plotHABdetails}.}
\label{fig:plotHAB}
\end{figure}

\begin{figure}
\includesvg[width=\linewidth]{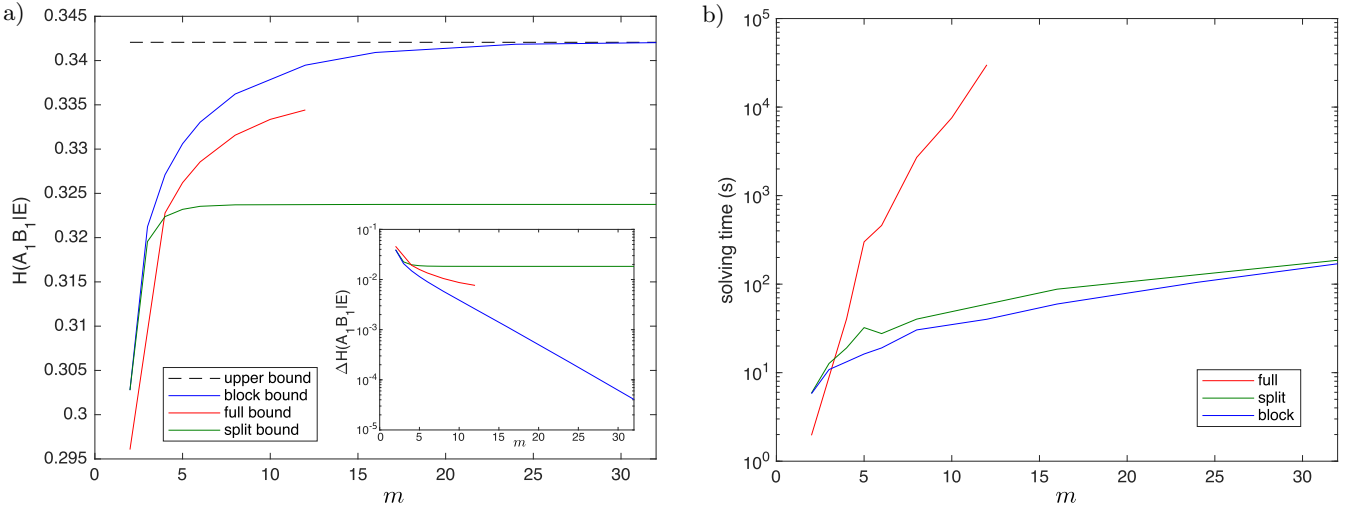}
\caption{a) Evolution of the lower bounds for the different SDP approaches as $m$ increases for $S=2.3$. The inset shows the difference between the upper bound and the various lower bounds. The split bound converges to a suboptimal value. The full method quickly becomes intractable, taking more than 10 hours to compute one point at $m=12$. The bounds obtained with the full method are smaller than the other ones from the other methods here because of the restricted hierarchy level $\mathcal{O}_1$: using level $\mathcal{O}_2$ allows the full method to recover the bounds obtained with the block method, although for an even higher computational cost, thus restricting $m$ even further. The block bound converges exponentially fast to the conjectured value up to numerical precision ($\sim 10^{-5}$). b) Computation time for each method as a function of $m$. The block method requires a single call to the SDP solver, thus limiting overhead and solving generally faster than the split method.}
\label{fig:plotHABdetails}
\end{figure}

\section{$I$-score} \label{sec:Iscore}
As discussed in \cref{app:block}, the block SDP hierarchy provides a bound on the conditional von Neumann entropy $H(A_1|E)$ as a function of the behavior $\mathcal{P}$. By strong dualty, its dual solution can be interpreted as a Bell expression providing a tight certificate of this entropy. Here, we discuss this Bell expression, which we refer to as the $I$-score. In particular, we illustrate geometrically why this $I$-score is in general better than the CHSH expression to bound the conditional von Neumann entropy. We then describe an iterative method that takes advantage of the $I$-score obtained by the block bound to optimize $H(A_1|E)(\mathrm{x})$ and the asymptotic key rate as a function of the setup and pre-processing parameters $\mathrm{x}=(g,\alpha_1,\alpha_2,\beta_0,\beta_1,\beta_2)$ and $\mathrm{p}$ while reducing the number of SDP resolutions. Finally, we benchmark this method on the three SDP hierarchies introduced above.

\subsection{Optimized entropy bound through the $I$-score}
\label{sec:I_score}

When computing $H(A_1|E)_{\text{block},\ell,m}$ for a behavior $\mathcal{P}_0$, the solution of the dual \cref{eq:fullstat3dual} provides a vector $\lambda_k$ from which we can define the Bell expression
\begin{equation}\label{eq:Iscore}
I(\mathcal{P})=\sum_k \lambda_k b_k(\mathcal{P}) = \innerproduct{\lambda}{\mathcal{P}},
\end{equation}
for all behaviors $\mathcal{P}$, where $\innerproduct{.}{.}$ is the Euclidean scalar product in the space of behaviors. As a Bell expression, the $I$-score admits two \textit{local bounds} $I_{C1}$,$I_{C2}$ and two \textit{quantum bounds} $I_{Q1}$,$I_{Q2}$ corresponding to the value of the $I$-score of the edges of the quantum set, see \cref{fig:bound_III}a).

For every value $I$, the $I$-score defines a hyperplane $D(I)=\{\mathcal{P}, \innerproduct{\lambda}{\mathcal{P}} = I\}$ of constant $I$-score in probability space. Strong duality implies that $H(A_1|E)_{\text{block},\ell,m}(\mathcal{P})\geq \innerproduct{\lambda}{\mathcal{P}_0}$ for all $\mathcal{P}\in D(I_0)$ with $I_0=I(\mathcal{P}_0)$. This ensures that the hyperplane $D(I_0)$ is tangent to the isosurface $H(A_1|E)_{\text{block},\ell,m}(\mathcal{P}) = I_0$ at $\mathcal{P}=\mathcal{P}_0$. More generally, the entropy as a function of the $I$-score can be defined as the minimum of $H(A_1|E)$ on the hyperplane $D(I)$ written as
\begin{equation}
    H(A_1|E)(I) = \min_{\mathcal{P} \in D(I) } H(A_1|E)(\mathcal{P}).
\end{equation}
A typical sketch of this function is shown in \cref{fig:bound_III}b).

\begin{figure}
  \centering
  \includesvg[width=0.95\linewidth]{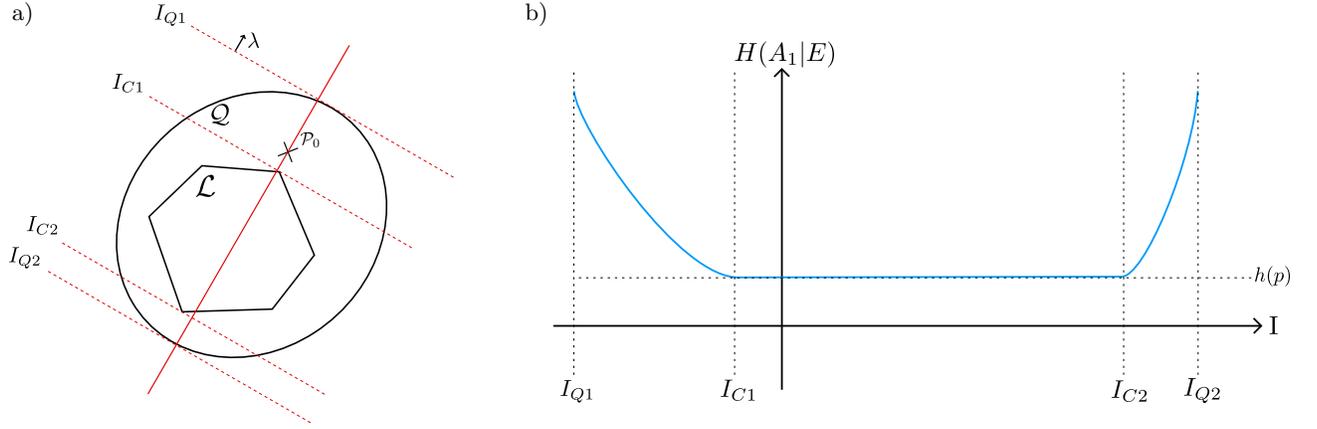}
  \caption{a) Sketch of the functional from of the $I$-score $I(\mathcal{P})$ in probability space. We observe the different bounds along with their relationship with the vector $\lambda$ and the local and quantum sets. b) Sketch of function $H(A_1|E)(I)$ in presence of noisy pre-processing, with the different bounds illustrated in a). We observe that in the local set, the function has the value $h(\mathrm{p})$ corresponding to the entropy generated solely by noisy-preprocessing.}
  \label{fig:bound_III}
\end{figure}

Let us illustrate why the CHSH expression is not always the best suited to estimate $H(A_1|E)$ for the computation of the asymptotic key rate given a behavior $\mathcal{P}_0$. We remind that the asymptotic key rate is given by $r = H(A_1|E) - H(A_1|B_0)$, and that $H(A_1|B_0)$ is fully determined by the statistics $\mathcal{P}_0$; hence improvements are expected from a better estimation of $H(A_1|E)$. In Fig.\ref{fig:CHSH_bad_I}, we can see that the local CHSH hyperplane (in red) is not necessarily colinear to the tangent of the iso-surface of $H(A_1|E)(\mathcal{P}_0)$. This implies that the minimisation of the entropy as a function of the CHSH score $H(A_1|E)(S)$ must take into accounts statistics with lower value of $H(A_1|E)$ than the minimization of $H(A_1|E)(I)$, therefore resulting in a suboptimal estimation of the von Neumann entropy for this point. When the full probability distribution $\mathcal{P}_0$ is accessible, it is thus more appropriate to use the $I$-score computed from $\mathcal{P}_0$ through \cref{eq:Iscore} rather than the CHSH score to estimate $H(A_1|E)$.

\begin{figure}
  \centering
  \includesvg[width=0.3\linewidth]{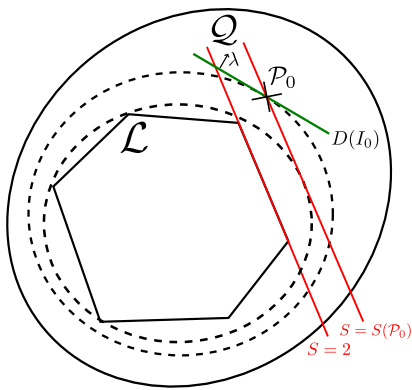}
  \caption{A slice of the probability distribution set. $\mathcal{L}$ is the local set, $\mathcal{Q}$ is the quantum set. The dotted lines correspond to two iso-surfaces $\{ \mathcal{P}, H(A_1|E)(\mathcal{P}) = \text{cst}  \}$, one inside the other, and $\mathcal{P}_0$ is a point in the distribution set. The green line corresponds to the hyperplane tangent to the isosurface at the point $\mathcal{P}_0$. The two red lines correspond to two hyperplanes containing all the distributions with the same CHSH score. We observe that for the CHSH score associated with the statistics $\mathcal{P}_0$, the hyperplane touches the surface with lower $H(A_1|E)$ (the one inside the other). The green hyperplane, however, excludes a larger conditional entropy iso-surface.}
  \label{fig:CHSH_bad_I}
\end{figure}

Now, we describe a first-order optimization scheme utilizing the $I$-score to optimize $H(A|E)(\mathcal{P}(x))$ with respect to the parameter of the setup $x$.

\subsection{Optimization of the key rate over the circuit parameters using the SDP dual vector}
\label{seq:FirstOrderOpti}

Knowing how to estimate $H(A_1|E)$ for a probability distribution $\mathcal{P}_0$, we are now interested in finding the parameters of the physical setup $\mathrm{x}$, such that the corresponding probability distribution $\mathcal{P}(\mathrm{x})$ maximizes the lower bound on $H(A_1|E)(\mathcal{P}(\mathrm{x}))$ provided by one of the methods described in \cref{app:block}. Computing these lower bounds requires solving an SDP, which can be computationally expensive. Therefore, when using a general optimization algorithm that calls this quantity frequently, the overall cost can become intractable. Thus, following~\cite{brownPrivate}, we develop a first-order method to optimize it with respect to $\mathrm{x}$ with the help of the previously defined $I$-score.

Starting with an initial parameter $\mathrm{x}_0$, $H(A_1|E)(\mathcal{P}(\mathrm{x}_0))$ is estimated using an SDP whose dual solution defines the $I$-score $I_0(\mathcal{P}(\mathrm{x}))=\innerproduct{\lambda}{\mathcal{P}(\mathrm{x})}$, c.f.~\cref{eq:fullstat1dual,eq:fullstat3dual} in the full and block cases. When using the split method, we use here $\lambda_k=\sum_{i=1}^M \lambda_k^i$, knowing that this $I$-score may not bound $H(A_1|E)$ optimally.

Considering small variations in parameters $\mathrm{x}$, we can estimate the induced change in the estimation of $H(A_1|E)$ to first order through the variation of the parameter $I$ by computing
\begin{equation}
    \alpha = \frac{d H(A_1|E)}{dI}(I_0(\mathrm{x}_0)).
\end{equation}
This characterizes how $H(A_1|E)(I_0)$ behaves in the vicinity of $I_0(\mathrm{x}_0)$. Importantly, the key rate does not solely depend on $H(A_1|E)$, but also on $H(A_1|B_0)$. Therefore, we consider the following function
\begin{equation}
    \phi(\mathrm{x}) = \alpha I(\mathcal{P}(\mathrm{x})) - H(A_1|B_0)(\mathcal{P}(\mathrm{x})) = \alpha \innerproduct{\lambda}{\mathcal{P}(\mathrm{x})} - H(A_1|B_0)(\mathcal{P}(\mathrm{x})),
\end{equation}
which captures locally the dependence of the asymptotic key rate on the system parameters $\mathrm{x}$, and can be defined with few calls to an SDP solver (to estimate $I(\mathcal{P}(\mathrm{x}))$ and $\alpha$). Optimizing this function, using \textit{Nelder-Mead} or \textit{Gradient descent}, is really fast and efficient. It gives us a new parameter $\mathrm{x}_1$ likely to provide a higher key rate.

Note that we have to be careful here when using noisy pre-processing, because the SDP estimation of $H(A_1|E)$ depends on the parameter $\mathrm{p}$. So, we fix this parameter during this optimization. The noisy pre-processing parameter is then optimized independently of the other parameters. In practice, we implement a grid refinement search.

Overall, we thus optimize the parameters $\mathrm{x}$ and $\mathrm{p}$ with the following strategy: start with $(\mathrm{x}_0,\mathrm{p}_0)$, estimate $H(A_1|E)$ to obtain the $I_0$-score, then run the first-order optimization to get $(\mathrm{x}_1,\mathrm{p}_0)$, then optimize the noisy pre-processing parameter to get $\mathrm{p}_1$ (because it is only one parameter, it is not costly to optimize $H(A_1|E)$ over $\mathrm{p}$ by SDP). With the new set of parameters $(\mathrm{x}_1,\mathrm{p}_1)$, we can restart the process a few times. In practice, we typically perform three such iterations.

\subsection{Benchmarking of the keyrate optimization with the different SDP methods}
\label{seq:lilbenchmark}

Here, we benchmark the first-order keyrate optimization with the three methods described in \cref{app:block}. We implement these and any subsequent methods in the Julia \cite{bezanson2017julia} programming language, using JuMP \cite{Lubin2023} as an interface for optimization and Mosek \cite{mosek} as a solver for the SDP. We rely on the numerical implementation of the NPA hierarchy written by Erik Woodhead's~\cite{Woodhead} and use the Julia package FastGaussQuadrature.jl \cite{GaussRadau} for the computation of Gauss-Radau quadratures. In practice, when declaring our optimization problem in JuMP, we define it in its dual form, since we notice that it is dealt with more efficiently by the solver. We also note that, as mentioned in Ref.~\cite{brown_device-independent_2021}, it is sufficient to consider real moment matrices when running our computations. This is because for any feasible complex moment matrix $\Gamma$, the matrix ($\Gamma$ + $\Gamma^{\dagger}$)/2 is also a feasible moment matrix, that is real and with the same objective value. Our code is available at~\cite{gitlabCode}.

Concretely, we consider the setup studied in the main text, generating a two-mode squeezed vacuum that is then locally displaced and measured using NPNR measurements of efficiency $\eta$. For the purpose of this comparison, we place ourselves at efficiency $\eta = 0.85$ and optimize the estimation of $H(A_1|E)(\mathcal{P}(x))-H(A_1|B_0)$ using the 3 different methods with the approach described in Sec.\ref{seq:FirstOrderOpti}.

Furthermore, we consider the NPA level defined by the following operators $\mathcal{O}_2 = [\hat{\mathbb{I}},\hat{A}_0,\hat{A}_1,\hat{A_0}\hat{A_1}, \hat{A_1}\hat{A_0}] \otimes [\hat{\mathbb{I}},\hat{B}_0,\hat{B}_1,\hat{B}_0 \hat{B}_1, \hat{B}_1 \hat{B}_0] \otimes [\hat{\mathbb{I}},\hat{Z}]$, where $\hat Z$ denotes all the $\hat Z$ operators that are relevant to the considered SDP hierarchy. In the case of the full bound, the time and memory required for the computation limits us to values of $M\leq10$ when solving an SDP and to $M\leq 7$ when optimizing over the settings.

In \cref{fig:higher_hierarchy}a), we observe that the full and block bounds provide a similar result, while the split bound gives a significantly lower value. In particular, the split method converges to a suboptimal entropy bound and key rate.

We emphasize that the curves in this figure are the result of the non linear optimization defined in \cref{seq:FirstOrderOpti} applied just 3 times with a different SDP. Therefore, the final statistical point corresponding to each hierarchy is the result of a distinct optimization and corresponds to different setup parameters $\mathrm{x}$. This is the reason why the block bound may sometimes be higher than the full bound even if both are computed at the same hierarchy level here.

\begin{figure}
  \centering
  \includesvg[width=\linewidth]{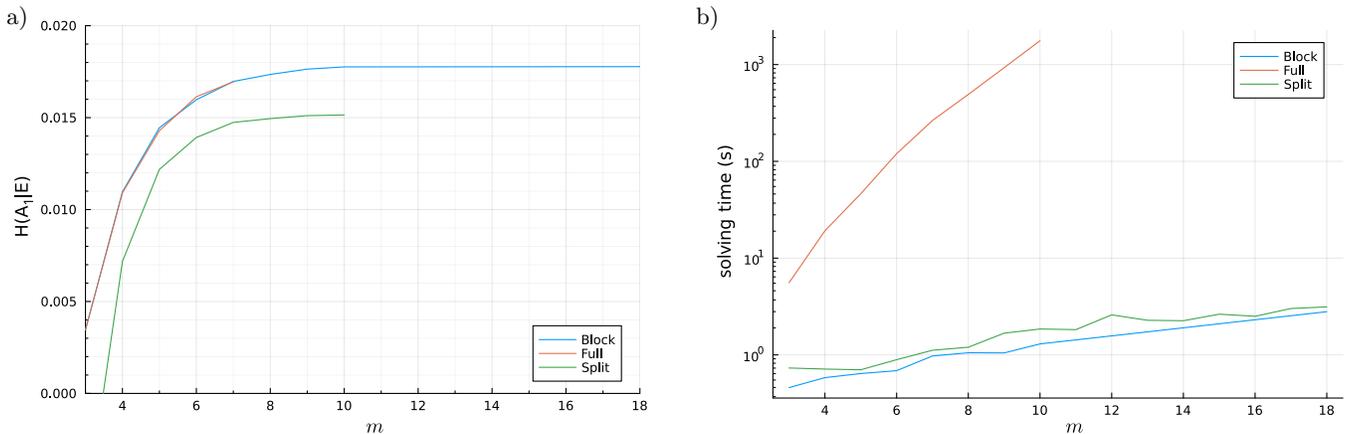}
  \caption{a) Comparison of the numerical values obtained after parameter optimization for the different bounds as a function of $m$. b) Benchmarking of the computation time for one evaluation of $H(A_1|E)$ using each SDP method as a function of $m$.}
  \label{fig:higher_hierarchy}
\end{figure}

In \cref{fig:higher_hierarchy}b), we also compare the computation time needed by each SDP hierarchy. We see that the full bound takes order of magnitude more time to compute than the split and block bounds. A similar scaling is observed for the split and block bounds, with a slight advantage for the block method that can be attributed to the reduce amount of overhead achieved by calling the SDP solver once instead of $m$ times.

These results are consistent with the ones reported in \cref{sec:compareBounds}, confirming that the block method combines the precision of the full bound and the efficiency of the split bound. In fact, the improved efficiency allows us to obtain bounds that are better than those accessible by our computation capabilities with the full method. Moreover, as stated earlier, a single tight Bell inequality can be extracted from the block bound, which is not possible with the split bound. For all these reasons, we use the block bound in this work to estimate conditional von Neumann entropies.

\section{Finite size full statistics}
\label{seq:finite_fullstat}

In this section, we present a security proof for the DIQKD protocol described in the main text that takes into account not solely the CHSH score of a setup but its complete behavior $\mathcal{P}$. Our analysis builds on the finite size security proof developed in Ref.~\cite{nadlinger_device-independent_2022}, summarized in its Proposition 1. This result relies on the EAT to bound the conditional smooth min-entropy $H^{\epsilon}_{\min}(A^n|X^nY^nE)$ as a function of the CHSH score, and thus provide a secure key length $l$ valid in presence of coherent attacks. In order to take into account the complete behavior $\mathcal{P}$, we first define relevant events, followed by EAT channels and min-tradeoff functions on which to apply the EAT to take advantage of the full statistics. This allows us to obtain a secret key length, as described in \cref{sec:keylength}. Finally, we discuss in more details the computation of several terms appearing in the key length formula.

\subsection{EAT ingredients with full statistics}

A central component of the DIQKD proof in Ref.~\cite{nadlinger_device-independent_2022} is the register $U=0,1,\perp$, which encodes the Bell score of each round during the protocol. The fact that this parameter only takes three values is possible because the CHSH Bell expression can be cast as a binary game. In order to access the complete observed probabilities $\mathcal{P}$, we need to allow for more freedom at the level of parameter estimation. Let us thus define the event $k \in \mathcal{X}$, stored in the register $K$, with $\mathcal{X}$ the set of possible events
\begin{equation}
    \mathcal{X} = \{(a,b,x,y) \text{ with } a,b \in \{0,1\}, x,y \in \{1,2\}  \} \cup \{ \perp\}.
\end{equation}
This allows us to now precise the form of EAT channels in presence of full statistics.

\subsubsection{Test rounds channels and key generation rounds channels}

Starting with the test rounds, we define the test quantum channels $\mathcal{M}^{test} : R \rightarrow RABXYTK $ with fixed $T=1$ and $k = (a,b,x,y)$ the choice of inputs and the outputs. This channel takes as input a quantum state in register $R$ and outputs another quantum state in $R$, the inputs in registers $X,Y$, outputs in registers $A,B$, the nature of the round in register $T$, and the event for EAT in register $K$. Each test channel corresponds to one test round of the protocol, taking a quantum state as input and generating an event $(a,b,x,y)$. We denote by $q(a,b,x,y)$ the joint probability distribution where the inputs $x,y$ are the chosen settings and the outputs $a,b$ are the corresponding measurement results. The set of distributions generated by test channels $\mathcal{Q}^{test}$ is defined, in Ref.~\cite{Liu2021} as

\begin{equation}
    \mathcal{Q}^{test} = \{ q ,\text{  } \exists \mathcal{M}^{test}, \hat{\rho} \in \mathcal{L}(\mathcal{H}) \text{  s.t.\,} \left( \mathcal{M}^{test}(\hat{\rho})  \right)_K = \sum_{a,b,x,y} q(a,b,x,y) \ket{a,b,x,y} \bra{a,b,x,y}  \}.
\end{equation}
Here, the distribution $q \in \mathcal{Q}^{test}$ corresponds to a distribution of input/outputs $(a,b,x,y)$ compatible with quantum mechanics during a test round. $\left( \mathcal{M}^{test}(\hat{\rho})  \right)_K$ corresponds to the register $K$ after the application of the testing channel to $\hat{\rho}$. Because we define these quantum channels of test rounds as being uniform on the setting $x,y$, it implies that $\sum_{a,b} q(a,b,x,y) = q(x,y) = \frac{1}{4}$. Note that the symbol $K=\perp$ is not used in a test round, so we have $q(\perp)=0$.

To any distribution $q(a,b,x,y) \in \mathcal{Q}^{test}$ of inputs/outputs, we can associate a distribution $\mathcal{P}_q(a,b|x,y) \in \mathcal{Q}$ of outputs conditioned on inputs, a distribution in the black-box paradigm, using the relation:

\begin{equation}
    \mathcal{P}_q(a,b|x,y) = \frac{q(a,b,x,y)}{q(x,y)} = \frac{1}{4} q(a,b,x,y).
\end{equation}
This enables us to define important quantities as a function of the inputs/outputs $(a,b,x,y)$ generated by test quantum channel $\mathcal{M}^{test}$ such as

\begin{equation}
    \begin{split}
        H(A|E)(q) &= H(A|E)(\mathcal{P}_q)\\
        I(q) &= I(\mathcal{P}_q)
    \end{split}
\end{equation}

We define the \textit{key generation} event as $\perp$ and key generation channels $\mathcal{M}_{keygen} : R \rightarrow RABXYTK $ with fixed $T=0$, $K = \perp$, $X=1$, $Y=0$, a key generation channel.

\subsubsection{The infrequent sampling channel}

Each round of the protocol is either with probability $\gamma$ a test round, or with probability $1-\gamma$ a key generation round. Thus, the channels we consider are \textit{infrequent sampling channels} : $\mathcal{M} : R \rightarrow RABXYTK$ defined as

\begin{equation}
    \mathcal{M}(.) =  \gamma \mathcal{M}^{test}(.) + (1-\gamma) \mathcal{M}^{keygen}(.),
\end{equation}
with $\mathcal{M}^{test}$ a testing channel and $\mathcal{M}^{keygen}$ a key generation channel. These infrequent sampling channels generate distributions in the following set

\begin{equation}
    \mathcal{Q}^{\gamma} = \{ p ,\text{  } \exists \mathcal{M}, \hat{\rho} \in \mathcal{L}(\mathcal{H}) \text{  s.t.\,} \left( \mathcal{M}(\hat{\rho}) \right)_K = \sum_{k \in \mathcal{X}} p(k) \ket{k} \bra{k}  \}.
\end{equation}
On the one hand, with a probability distribution $q \in \mathcal{Q}^{test}$ from test round, we can generate $p \in \mathcal{Q}^{\gamma}$ for both rounds with the following transformation

\begin{equation}
\label{eq:definition_p}
     p(k) = \begin{cases} 
     \gamma q(k) & \text{if } k = (a,b,x,y), \\
    1-\gamma & \text{if } k = \perp.
       \end{cases}
\end{equation}
On the other hand, with a fixed probability $\gamma$, and a probability distribution $p \in \mathcal{Q}^\gamma_{\mathcal{G}}$, we can uniquely extract a probability distribution $q \in \mathcal{Q}^{test}$ of test round. It means that we can define the $I$-score and $H(A|E)(p)$ for any distribution $p$ from the definition given for $q$.

\subsubsection{Min-tradeoff functions}

To apply EAT, it is required to have access to a min-tradeoff function for our probability distribution set $\mathcal{Q}^\gamma_{\mathcal{G}}$. Taking $p \in \mathcal{Q}^\gamma_{\mathcal{G}}$, we extract the probability distribution $q \in \mathcal{Q}^{test} $ and define a min-tradeoff function $g$, in the sense of Lemma 3 from \cite{Liu2021}, as:

\begin{equation}
    g(q) \leq \inf_{(\hat{\rho}_{RE},\mathcal{M})} \{ H(AB|E)_{\mathcal{M}(\hat{\rho}_{RE})}: \left( \mathcal{M}^{test} (\hat{\rho}_{RE}) \right)_K = \sum_{a,b,x,y} q(a,b,x,y) \ket{a,b,x,y} \bra{a,b,x,y} \},
    \label{eq:mintradeoffLiu}
\end{equation}
with $\hat{\rho}_{RE}$ the quantum state input in the quantum channel $\mathcal{M}$. The min-tradeoff function $g$ provides a lower bound on the conditional entropy $H(AB|E)$ for all infrequent sampling channels and quantum states whose associated test channel yields the distribution $q$. Lemma 3 from  \cite{Liu2021} tells us that if we have such a min-tradeoff function $g$, then the following function $f$ is a min-tradeoff function for the infrequent sampling channels:

\begin{equation}
\label{eq:mintradeoff_full}
     f(\delta_k) = \begin{cases} 
          \frac{1}{\gamma} g(\delta_k) + (1-\frac{1}{\gamma})c_{\perp} & \text{if }k = (a,b,x,y) \\
          c_{\perp} & \text{if } k = \perp
       \end{cases},
\end{equation}
with $c_{\perp} \in \mathbb{R}$.

Let us now describe the min-tradeoff function that we use in the modified protocol to account for the full statistics.

\paragraph{Building the min-tradeoff function}

\cref{eq:mintradeoffLiu} requires a bound on $H(AB|E)$. We first use the following bound:
\begin{equation}
    H(AB|E) \geq H(A|E),
\end{equation}
so that we now look at the infinimum on $H(A|E)$ on an infrequent sampling channel that generates the distribution $q$ during test rounds. Note that in the infrequent sampling channel, the probability of Alice choosing the setting $x$ is different from that in testing rounds. We decide to bound this quantity using the $I$-score of the distribution $q$. So we need to consider the following quantity
\begin{equation}
\label{eq:infrequent_conditional_entropy}
    H(A|E)(I(q)) = (1-\gamma)  H_p(A_1|E)(I(q)) + \frac{\gamma}{2}  ( H(A_1|E)(I(q)) + H(A_2|E)(I(q))).
\end{equation}
The first term on the right-hand side denotes the minimum of the conditional entropy with noisy pre-processing over probability distribution $q$ with and $I$-score $I(q)$, corresponding to the key generation round coming with probability $1-\gamma$, in which Alice chooses the setting $x=1$. The second term in the right-hand side is the minimum of the entropy of test rounds coming with probability $\gamma$ over probability distribution $q$ with $I$-score $I(q)$, in which Alice chooses with probability 1/2 either the setting $x=1$ or $x=2$ and in absence of pre-processing.

To ensure that this quantity is indeed a min trade-off function, we recall the following relation
\begin{equation}
    H(A|E)(q) \geq H(A|E)(I(q)),
\end{equation}
so it means that we have the following bound:
\begin{equation}
    H(AB|E)(q) \geq H(A|E)(I(q)).
\end{equation}
In order to obtain an affine min-tradeoff function, we define the min trade-off function for $q \in \mathcal{Q}^{test} $ and for an $I$-score value $v \in  \mathbb{R}$ as
\begin{equation}
\label{eq:mintradeoff_testround}
    g_v(q) = H(A|E)(v) + (I(q)-v) \frac{dH(A|E)(I)}{dI}(v).
\end{equation}

The function $g_v$ corresponds to the tangent of $H(A|E)(I)$ at point $v$, and the derivative of the function is estimated numerically using finite differences as detailed below in~\cref{sec:fi_and_gi}. The transformation \cref{eq:mintradeoff_full} defines the min-tradeoff function $f_v$ on no-signaling statistics. It is important to note, however, that this does not identify the min-tradeoff function completely, because $f_v$ acts on all probability distributions, including signaling ones.

\paragraph{Parametrizing the signaling component of the min-tradeoff function}
In the CHSH scenario, the value of a Bell expression, such as the $I$-score, is generally parametrized by 8 components, allowing its evaluating on all no-signaling statistics. Collins-Gisin probabilites constitute one such possible parametrization~\cite{Collins04}. The space of full probability distributions $\mathcal{P}(a,b|x,y)$ is however 16-dimensional, which leads to an additional freedom in the description within this space. This generic phenomenon also applies to Bell games such as the min-tradeoff functions $g_v(q)$ and $f_v(q)$~\cite{Rosset14}.

For this reason, a complete description of the min-tradeoff function is given by constructing it through \cref{eq:mintradeoff_full,eq:mintradeoff_testround} on the fully parametrized $I$-score
\begin{equation}
\bar I(\mathcal{P}) = I(\mathcal{P}) + \sum_j \mathrm{f}_j \innerproduct{V_j}{\mathcal{P}},
\end{equation}
where the 8 linearly independent vectors $V_j \in \mathbb{R}^{16}$ are such that $\innerproduct{V_j}{\mathcal{P}} = 0$ on any no-signaling distribution $\mathcal{P}$ for all $j$. These vectors encode the nosignaling constraints $\sum_a q(a,b|x,1)=\sum_a q(a,b|x,2)$ and $\sum_b q(a,b|1,y)=\sum_b q(a,b|2,y)$. We refer to the corresponding min-tradeoff functions, which depend on the parameters $\vec{\mathrm{f}}=(\mathrm{f}_1,\ldots,\mathrm{f}_8)$ as respectively $\bar g_v$ and $\bar f_v$. We emphasize that by construction, $\bar I(\mathcal{P})=I(\mathcal{P})$ for all no-signaling behavior $\mathcal{P}$, $\bar g_v(p)=g_v(p)$ for all $p\in\mathcal{Q}^{test}$, and $\bar f_v(q)=f_v(q)$ for all $q\in\mathcal{Q}^\gamma$.

Now that we have detailed the events considered (taking into account all statistics and not just the win/loose outcomes of the CHSH game) and defined a min-tradeoff function based on the $I$-score, we are ready to modify the DIQKD security proof from \cite{nadlinger_device-independent_2022} to account for full statistics.

\subsection{Finite key length formula with full statistics}\label{sec:keylength}

In order to adapt the security proof derived in Proposition 2 of \cite{nadlinger_device-independent_2022}, we apply the general Theorem 3 of \cite{Liu2021} on our specific scenario involving the $I$-score using the events and min-tradeoff functions defined above (in place of Theorem 1 in Ref.~\cite{nadlinger_device-independent_2022} involving the CHSH score). This yields the following bound on the extractable number of secure key bits $l$:

\begin{equation}
\begin{split}
l &=   n g_v(I_{thr}) + n \inf\limits_{p \in \mathcal{Q}^{\gamma}}(\Delta(f_v,p) - (\alpha'-1)V(\bar f_v,p) )  \\ 
 & -n(\alpha'-1)^2 K_{\alpha'}(f_i) -n\gamma -n(\alpha"-1) \log^2(5) \\
& -\frac{1}{\alpha'-1}(\vartheta_{\epsilon_s'}+\alpha' \log(\frac{1}{\epsilon_{EA}})) - \frac{1}{\alpha"-1}(\vartheta_{\epsilon_s"} + \alpha" \log(\frac{1}{\epsilon_{EA}}) ) \\
 &-3\vartheta_{\epsilon_s-\epsilon_s'-2\epsilon_s"} - 5 \log(\frac{1}{\epsilon_{PA}})-EC-264
 \label{eq:finitesize}
\end{split}
\end{equation}
with 
\begin{equation}
    \begin{split}
        &\Delta(f_v,p) = H(A|E) ( I(p) ) - f_v(I(p)) \\
        & V(\bar f_v,p) = \frac{\ln(2)}{2} \left( \log(33) + \sqrt{2 + \text{Var}_p(\bar f_v)} \right)^2 \\
        & \text{Var}_p(\bar f_v) = \sum_x p(x)  (\bar f_v (x)-\mathbb{E}_p [ f_v ])^2 \\
        &K_{\alpha'}(f_v) = \frac{1}{6(2-\alpha')^3 \ln(2))} 2^{(\alpha'-1)(2+\max_{\mathcal{D}_{X_\gamma}}(f_v) - \min_{\mathcal{Q}^{\gamma}}(f_v))} \ln^3( 2^{(2+\max_{\mathcal{D}_{X_\gamma}}(f_v) - \min_{\mathcal{Q}^{\gamma}}(f_v))} +e^2)  \\
        &I(p) = \innerproduct{C}{\mathcal{P}(p)}\\
        &\vartheta_\epsilon = \log(\frac{1}{1-\sqrt{1-\epsilon^2}})
%        &EC=n((1-\gamma)H(A|B)_{keygen} + \gamma H(A|B)_{test}) + 50 \sqrt{n}
    \end{split}
\end{equation}
and $EC=|\mathrm{M}|$ is the length of the error correction syndrome.

In order to compute the keyrate for given behavior $\mathcal{P}$, a number of repetitions $n$ and security parameters $\epsilon_{s}, \epsilon'_s, \epsilon"_s, \epsilon_{EA}$ and $\epsilon_{PA}$, we proceed as follows. First, we compute the $I$-score by the block method using \cref{eq:Iscore}. We then fix the signaling components $\vec{\bar f}$ as described in \cref{sec:sigComp}. This allows us to fix $I_{thr}$ by considering a completeness parameter $\epsilon_c$ as detailed in \cref{comp}. The remaining parameters $v,\alpha',\alpha",\gamma,c_{\perp}$ are optimized in order to maximize the key length $l$.

We note that it is helpful for this optimization to take special care when building the function $g_v$. Details are given in \cref{sec:fi_and_gi}. In \cref{sec:internal_opt,sec:kalpha}, we provide additional insight regarding the practical evaluation of internal optimization in \cref{eq:finitesize} and of the $K_{\alpha'}(f_v)$ term. Finally, we discuss our choice of error correction cost $EC$ in \cref{sec:EC}.

\subsubsection{Optimizing the signaling components}
\label{sec:sigComp}

The main contribution of the signaling components $\vec{\mathrm{f}}$ to the key length $l$ lies in the variance term $\text{Var}_p(\bar f_v)$. In order to limit this contribution, we choose them such that
\begin{equation}
\max_k \bar I(\mathcal{P}_{\delta_k}) - \min_k \bar I(\mathcal{P}_{\delta_k})
\end{equation}
is minimal. Here, $k\in \{(a,b,x,y) \text{ with } a,b\in\{0,1\} \text{ and } x,y\in\{1,2\}\}$ and
\begin{equation}
\mathcal{P}_{\delta_k}(a,b|x,y)=\begin{cases}1 & k=(a,b,x,y) \\ 0 & \text{otherwise}\end{cases}
\end{equation}
is the delta distribution.

\subsubsection{Threshold on the $I$-score \texorpdfstring{$I_{thr}$}{} and \texorpdfstring{$\epsilon$}{}-completeness of the protocol}
\label{comp}

In an experiment using the security proof developed here, we have to put a threshold on the observed $I$-score for the protocol to abort. This requirement is here to ensure secrecy of the protocol: if the observed $I$-score is not high enough, we conclude that the devices are not working as expected. In this case the protocol aborts. We thus have to define a threshold on this $I$-score: $I_{thr}$. When the protocol is carried out and no problem occurs, we expect to have a probability of success close to 1. We decide to put this probability at $1-\epsilon_c = 95\%$. To do so, we apply the Central limit theorem to estimate the value of $I_{thr}$.

In practice, the observation of the statistics is done on the 16 possible events $\{a,b,x,y\}$ of the inputs and outputs of Alice and Bob. Out of the empirical distribution of the observed events, we compute the observed $\bar I$ value, using the vectors $\lambda_k$ and $\vec{\mathrm{f}}$ characterizing the complete $I$-score. We expect that the model fits the experimental behavior and suppose it produces events according to the distribution $\mathcal{P}$. The associated mean value of $\bar I$ is $\mathbb{E}[I] = I(\mathcal{P})$, and its variance is $\sigma(\bar I)$. Using the Central Limit Theorem, we can then build the following random variable

\begin{equation}
    Z_n = \frac{\frac{1}{n} \sum_{i=0}^n I_i - \mathbb{E}[I]}{\sigma(\bar I)/\sqrt{n}} \sim \mathcal{N}(0,1).
\end{equation}
We have $\mathbb{P}(Z_n \geq 1.65) = 0.05$, so if we replace the value of $Z_n$ we get

\begin{equation}
    \mathbb{P}(\frac{1}{n} \sum_i^n I_i \geq \mathbb{E}[I] + \frac{1.65\sigma(\bar I)}{\sqrt{n}}) = 0.05.
\end{equation}
So, using this approximation provided by the central limit theorem, for a $95\%$ success rate of the protocol, we can put $I_{thr} = \mathbb{E}[I] + \frac{1.65\sigma(\bar I)}{\sqrt{n}}$.

\subsubsection{Obtaining the functions \texorpdfstring{$f_v$ and $g_v$}{}}
\label{sec:fi_and_gi}

In order to evaluate the function $g_v$ defined in \cref{eq:mintradeoff_testround}, one needs to compute the derivative of $H(A_1|E)(I)$. We do this numerically by computing $H(A_1|E)(I)$ on an array of points $I$ and constructing an interpolation of the resulting points using Schumaker's spline \cite{Schumaker}. This interpolation has the property of keeping the convexity of the interpolated points \cite{Schumakerarticle}, which ensures numerical stability. Concretely, we use this interpolation to approximation numerically the derivative of $H(A_1|E)$, which is required to obtain $g_v$ and $f_v$.

Additionally, we smoothen out the junction of this function at the boundary of the local set to ease the internal optimisation. Indeed, the last point we compute lies on, or is close to, the local bound, and we know that the function will remain constant beyond that point. Therefore, we smooth the function so that it decreases at most by $\epsilon$ afterward, see Fig.\ref{fig:curve_engineering}. This corresponds to a conservative estimate of the entropy.

\begin{figure}
  \centering
  \includesvg[width=0.3\linewidth]{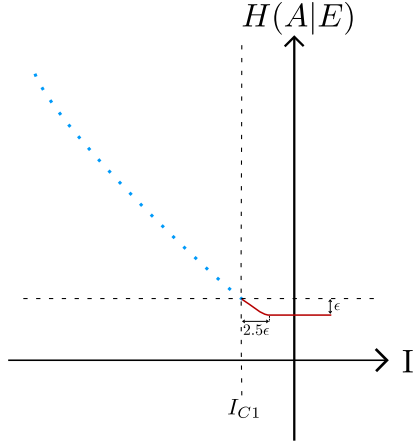}
  \caption{Sketch of the curve smoothing close to the classical bound. The blue dots correspond to the numerical estimation of $H(A|E)$. In the local set, for $\epsilon \geq 0$, we down-shift the value by $\epsilon$, and on the segment $[I_{C_1},I_{C_1}+2.5\epsilon]$ we apply a continuous junction with an polynomial of order three with the last point estimated and the value down-shifted in the local set.}
  \label{fig:curve_engineering}
\end{figure}

To smooth the curve, we define a polynomial of order three to match the interpolated curve to the constant function in the local set, such that the junction is continuous and has the same derivatives (an interpolation of order less than three cannot, in general, ensure smoothness with respect to both the function and its derivative). The function $f_v$ is then obtained using \cref{eq:mintradeoff_full}. We emphasize that this does not affect the final key rate, but only eases the optimization procedure.

Note that we only need to consider the function $H(A|E)(I)$ on the interval $I\in [I_{Q_1},I_{C_1}]$ when bounding the key length in \cref{eq:finitesize}. To see this, notice that $I$ only intervenes in \cref{eq:finitesize} through the first term ($g_v$) and the second term (the internal minimization). Taking an initial point $\mathcal{P}_0$, over which we define the $I$-score and a $I_{thr}$ close to 0 (we have $I(\mathcal{P}_0) = 0$), it means that for any point $\mathcal{P}$, $g_v(I_{thr})$ such that $I(\mathcal{P}) \geq I_{C_1}$ cannot give a positive key length. So we need to choose $v \in [I_{Q_1},I_{C_1}]$. Additionally, in the minimization term, the minimizer $\mathcal{P} \in \mathcal{Q}^{\gamma}$ has to give an $I$-score in this interval, because otherwise the term $\delta(f_v,p)$ would be too large. It means that we can restrain $g_v$ and $f_v$ on the interval $[I_{Q_1},I_{C_1}]$.

\subsubsection{Handling of the internal optimization term}
\label{sec:internal_opt}
The new key length formula in \cref{eq:finitesize} has the following internal optimization term
\begin{equation}
    \inf_{q\in \mathcal{Q}^{\gamma}} ( \Delta (f_v,p(\gamma,q)) - (\alpha-1) V(\bar f_v,p(\gamma,q)) ),
    \label{eq:convexproblem}
\end{equation}
with $p(\gamma,q) = (\gamma q_1, \dots , \gamma q_{16},1-\gamma)$. This optimization is critical, because the length of the key is valid only if the minimum is found. Thankfully, this problem is a convex, as shown in Ref.~\cite{Liu2021}, and so any local minimum is a global minimum.

In order to perform this optimization while staying on the safe side, we relaxing the condition $\mathcal{P} \in Q$, which is difficult to enforce exactly, to admit any distribution $\mathcal{P}$ satisfying positivity, a normalization, non-signaling constraints and $I(q) > I_Q$ with $I_Q$ one of the two quantum bounds. We then optimize over all such behaviors $\mathcal{P}$, we use the algorithm Ipopt \cite{IPopt} with the interface JuMP \cite{Lubin2023}.

\subsubsection{Computation of \texorpdfstring{$\max(f)$}{} and \texorpdfstring{$\min(f)$}{} for the term \texorpdfstring{$K_{\alpha'}(f)$}{}}
\label{sec:kalpha}

The following quantities
\begin{equation}
    \begin{split}
    \label{eq:minmax_terms}
         \max_{\mathcal{P} \in \mathcal{D}_{X_\gamma}}(f) &=  \max\{ \frac{1}{\gamma}  \max_{\mathcal{P} \in \mathcal{D}_{X_\gamma}}(g) + (1-\frac{1}{\gamma})c_{\perp},c_{\perp}  \}\\
         \min_{\mathcal{P} \in \mathcal{Q}^{\gamma}}(f) &= \min_{\mathcal{P} \in \mathcal{Q}^{test}}(g)
    \end{split}
\end{equation}
appear in the formula of the key length given in \cref{eq:finitesize} via the term $K_{\alpha'}(f)$. The first term in \cref{eq:minmax_terms} corresponds to the maximum of $f$ over any distribution that can lie outside of the quantum set. We obtain this quantity by finding $I_{max}$ over all possible distributions, which is done using the algorithm Ipopt \cite{IPopt} with the interface JuMP \cite{Lubin2023}. The second term is the minimum of $f$ on the quantum set; it corresponds to one of the two quantum bounds because the min trade-off function is affine. We show in Fig.\ref{fig:minfmaxf} the associated value of $I$ for these extremal distributions.

\begin{figure}
  \centering
  \includesvg[width=0.5\linewidth]{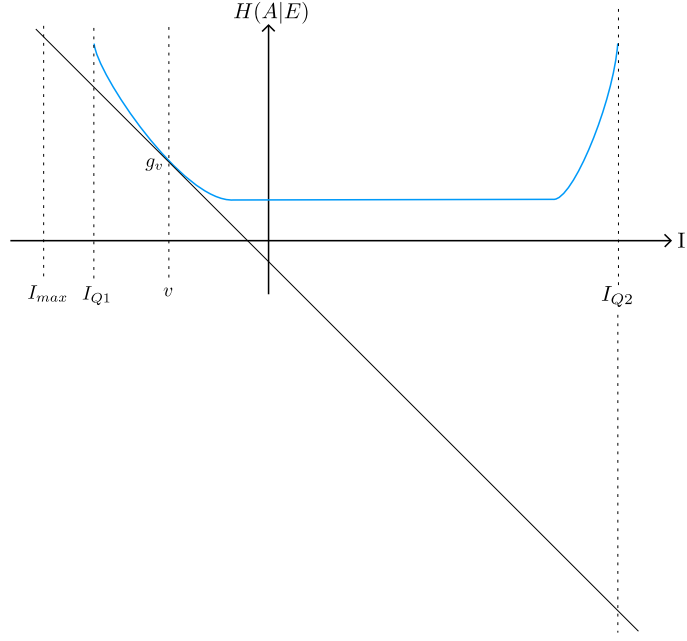}
  \caption{Sketch to show the quantities that have to be computed of $g_v$ for $K_{\alpha'}(f)$ (the tangent of $H(A|E)(I)$ at point $v$), the minimum is taken at $I_{Q2}$ and the maximum at $I_{max}$}
  \label{fig:minfmaxf}
\end{figure}

\subsubsection{Error correction term (EC)}
\label{sec:EC}
In order to estimate the number of bits used for the error correction, we follow the cost estimation given in Ref. \cite{nadlinger_device-independent_2022} for a spatially coupled LDPC code:
\begin{equation}
EC=n((1-\gamma)H(A|B)_{keygen} + \gamma H(A|B)_{test}) + 50 \sqrt{n}.
\end{equation}
Here,
\begin{equation}
    \begin{split}
        H(A|B)_{keygen} &= - \sum_{a,b \in \{0,1\}} \mathcal{P}(a,b|1,0) \log(\frac{\mathcal{P}(a,b|1,0)}{\mathcal{P}_A(a|1)}) \\
        H(A|B)_{test} &= -\frac{1}{4} \sum_{x\in \{1,2\} ,y \in \{1,2\}} \sum_{a,b \in \{0,1\}} \mathcal{P}(a,b|x,y) \log(\frac{\mathcal{P}(a,b|x,y)}{\mathcal{P}_A(a|x)}),
    \end{split}
\end{equation}
where $H(A|B)_{keygen}$ is the conditional entropy during key generation rounds and $H(A|B)_{test}$ during test rounds. Note that in  \cite{nadlinger_device-independent_2022} the entropy terms were directly expressed in terms of the system parameters (CHSH score and quantum bit error rate (QBER)).

\end{document}